\newtheorem{theorem}{Theorem}[section]
\newenvironment{proof}[1][Proof]{\begin{trivlist}
		\item[\hskip \labelsep {\bfseries #1}]}{\end{trivlist}}
\newenvironment{remark}[1][Remark]{\begin{trivlist}
		\item[\hskip \labelsep {\bfseries #1}]}{\end{trivlist}}
\newcommand{\bch}{\color{black}  }   
\newcommand{\ech}{\color{black}  ~}    
\DeclareMathOperator*{\argmax}{argmax}
\newcommand{\bea}{\begin{eqnarray*}}
	\newcommand{\eea}{\end{eqnarray*}}
\newcommand{\bean}{\begin{eqnarray}}
\newcommand{\eean}{\end{eqnarray}}
\newcommand{\bfX}{{\bf X}}
\newcommand{\V}{{\rm Var}}
\newcommand{\C}{{\rm Cov}}
\newcommand{\sg}{\Sigma}
\newcommand{\what}{\widehat}
\newcommand{\lra}{\longrightarrow}
\newcommand{\bbP}{\mathbb{P}} 
\newcommand{\bbR}{\mathbb{R}}
\newcommand{\bbE}{\mathbb{E}}
\begin{document}
	
	\title{Scalable Bayesian high-dimensional local dependence learning}
	
	\author[1]{Kyoungjae Lee}
	\affil[1]{Department of Statistics, Sungkyunkwan university}
	\author[2]{Lizhen Lin}
	\affil[2]{Department of Applied and Computational Mathematics and Statistics, University of Notre Dame}
	
	\maketitle

	\begin{abstract}
		In this work, we propose a scalable Bayesian procedure for learning the local dependence structure in a high-dimensional model where the variables possess a natural ordering. The ordering of variables can be indexed by time, the vicinities of spatial locations, and so on, with the natural assumption that variables far apart tend to have weak correlations. Applications of such models abound in a variety of fields such as finance, genome associations analysis and spatial modeling. We adopt a flexible framework under which each variable is dependent on its neighbors or predecessors, and the neighborhood size can vary for each variable. It is of great interest to reveal this local dependence structure by estimating the covariance or precision matrix while yielding a consistent estimate of the varying neighborhood size for each variable. The existing literature on banded covariance matrix estimation, which assumes a fixed bandwidth cannot be adapted for this general setup. We employ the modified Cholesky decomposition for the precision matrix and design a flexible prior for this model through appropriate priors on the neighborhood sizes and Cholesky factors. The posterior contraction rates of the Cholesky factor are derived which are nearly or exactly minimax optimal, and our procedure leads to consistent estimates of the neighborhood size for all the variables. Another appealing feature of our procedure is its scalability to models with large numbers of variables due to efficient posterior inference without resorting to MCMC algorithms. Numerical comparisons are carried out with competitive methods, and applications are considered for some real datasets.
		
		\textbf{Keywords:} Selection consistency,
		optimal posterior convergence rate,
		varying bandwidth.

	\end{abstract}
	

	\section{Introduction}
	
	The problem of covariance matrix or precision matrix estimation has been extensively studied over the last few decades.  A typical model setup is to assume $X=(X_1,\ldots, X_p)^T \in \bbR^p$  follows  a $p$-dimensional normal distribution   $N_p(0, \sg)$, with mean zero and covariance matrix $\sg \in\bbR^{p\times p}$. The dependence structure among the variables is encoded by  the covariance  matrix $\sg$ or its inverse $\Omega = \sg^{-1}$. 
	In high-dimensional settings where $p$ can be much larger than the sample size, the traditional sample covariance matrix or inverse-Wishart prior leads to inconsistent estimates of $\Sigma$ or $\Omega$, see \cite{johnstone2009consistency} and \cite{lee2018optimal}. Restricted matrix classes   with a banded or sparse structure are often imposed on the covariance  or precision matrices   as a common practice for consistent estimation (see, e.g., \cite{bickel2008regularized}, \cite{cai2016estimating} and \cite{lee2019minimax}).


	In this paper, we focus on investigating the local dependence structure in a high-dimension model where the variables possess a natural ordering.  An ordering on variables is often encountered for example in time series or genome data, where variables close to each other in time or location are more likely to be correlated than variables located far apart.  More specifically, we assume that \bch each variable depends on its neighboring variables or predecessors\ech and more importantly,  the \emph{the size of the neighborhood  or bandwidth can vary with each variable.}  Therefore, the existing literature on banded covariance matrix estimation which deals with a fixed bandwidth cannot be adapted for this set up. 
	
	\bch Our work employs the modified Cholesky decomposition (MCD) \citep{pourahmadi1999joint} of the precision matrix, which provides an efficient way to learn the local dependence  structure of the data. 
	The MCD has been widely used for covariance or precision matrix estimation including  in \cite{rutimann2009high}, \cite{shojaie2010penalized} and \cite{van2013ell}, just to name a few.\ech
	In more details, assume \bch the variables $X_1,\ldots, X_p$ are\ech arranged according to a known ordering.
	For any $p\times p$ positive definite matrix $\Omega$, \bch there uniquely exist lower triangular matrix $A = (a_{jl}) \in \bbR^{p\times p}$ and diagonal matrix $D= diag(d_j) \in \bbR^{p\times p}$\ech such that  $\Omega  = (I_p - A)^T D^{-1}  (I_p-  A)$, where $a_{jj} = 0$ and $d_j>0$ for all $j=1,\ldots, p$.
	It is called the MCD of $\Omega$, and we call $A$ the Cholesky factor. 
	Based on the MCD, $X = ( X_1,\ldots, X_p)^T \sim N_p(0, \Omega^{-1})$ is equivalent to a set of linear regression models, $X_1 \sim N(0, d_1)$ and 
	\bean
	X_j \mid X_1,\ldots, X_{j-1} &\sim& N \Big(  \sum_{l=1}^{j-1} a_{jl} X_l   , \,\,  d_j  \Big), \quad j=2,\ldots ,p  . \label{linear_reg}
	\eean
	In this paper, we assume \emph{local dependence structure} by considering 
	\bean\label{linear_reg_local}
	X_j \mid X_{j- k_j},\ldots, X_{j-1} &\sim& N \Big(  \sum_{l=j - k_j}^{j-1} a_{jl} X_l   ,  \,\,  d_j \Big), \quad j=2,\ldots ,p  
	\eean
	for some $k_j$ instead of \eqref{linear_reg}, which implies $a_{jl} =0 $ for any $l = 1, \ldots, j-k_j-1$ and $j=2,\ldots, p$.
	It leads to a varying bandwidth structure for the Cholesky factor which means that $X_j$ is dependent only on $k_j$ closest variables, $ X_{j- k_j},\ldots, X_{j-1}$, and conditionally independent of others.
	Our goal is to learn the local dependence structure of data based on model \eqref{linear_reg_local} by learning $a_{jl}, d_j$ as well as the varying bandwidth or neighborhood size $k_j$. \bch  Allowing a varying bandwidth in the above model is a more realistic assumption as the range and pattern of dependence can vary over time or spatial locations.\ech
	See Figure \ref{fig:A0s} for an illustration of the Cholesky structure with varying bandwidth.

	A number of work have been proposed for estimating sparse Cholesky factors based on penalized likelihood approaches, including \cite{rothman2010new}, \cite{van2013ell} and \cite{khare2019scalable}.
	However, these methods are not suitable for modeling local dependence  because they allow arbitrary sparsity patterns for Cholesky factors.
	For example, $X_j$ can depend on $X_1$ but not on $X_2,\ldots, X_{j-1}$  in their framework, which is not desirable in the context of local dependence.
	\cite{an2014hypothesis} considered a banded Cholesky factor and proposed a consistent test for bandwidth selection, but they assumed the same local dependence structure for all variables, that is, assumed a common $k$ instead of $k_j$ in \eqref{linear_reg_local}.	
	The most relevant work is \cite{yu2017learning}  which developed a penalized likelihood approach for estimating the banded Cholesky factor in \eqref{linear_reg_local} using a hierarchical group lasso penalty.
	Their approach is formulated as a convex optimization problem, and theoretical properties such as selection consistency and convergence rates were established.
	However, they required relatively strong conditions to obtain theoretical results, which will be discussed in detail in Section \ref{sec_main}.

	From a Bayesian perspective, \cite{banerjee2014posterior} and \cite{lee2017estimating} suggested $G$-Wishart priors and banded Cholesky priors for learning local dependence structure with a common bandwidth $k$, and posterior convergence rates for precision matrices are derived assuming $k$ is known. 
	\cite{lee2019minimax} proposed the empirical sparse Cholesky prior for sparse Cholesky factors.
	Under their model each variable can be dependent on distant variables, so it is not suitable for local dependence structure.
	\cite{lee2018bayesian} suggested a prior distribution for banded Cholesky factors.
	Although  bandwidth selection consistency  as well as consistency of Bayes factors were established in high-dimensional settings,  again a common bandwidth $k$ is assumed in their model. Achieving selection consistency simultaneously  for  bandwidths of all the variables is a much more challenging problem as the number of bandwidths as well as the size of the bandwidth can diverge as the number of variables $p$ goes to infinity.


	In this paper, we propose a prior tailored to Cholesky factors with local dependence structure. 
	The proposed prior allows a flexible learning of local dependence structure based on varying bandwidth $k_j$.
	This paper contributes in both theoretical and practical developments.
	For theoretical advancement, we prove selection consistency for varying bandwidth and \bch nearly\ech optimal posterior convergence rates for Cholesky factors.
	This is the first Bayesian method for local dependence learning with varying bandwidth in high-dimensional settings with theoretical guarantees. 
	Furthermore, we significantly weaken required conditions to obtain theoretical properties compared with  those in \cite{yu2017learning}.
	\bch On the other hand, from a practical point of view, the induced posterior allows fast computations, enabling scalable inference for large data sets.\ech	
	The posterior inference does not require Markov chain Monte Carlo (MCMC) algorithms and is easily parallelizable.	
	Furthermore, our simulation studies show that the proposed method outperforms other competitors in various settings.
	We find that the proposed cross-validation for the proposed method is much faster than the contenders and selects nearly optimal hyperparameter in terms of specificity and sensitivity.
	Finally, it is worth mentioning that posterior inference for sparse Cholesky factors with an arbitrary sparsity pattern  is computationally much more expensive than that for banded Cholesky factors, developing a statistical method for banded Cholesky factors is therefore of great importance independent of existing methods for sparse Cholesky factors.

	

	The rest of paper is organized as follows.
	In Section \ref{sec:prel}, model assumptions, the proposed local dependence Cholesky prior and the induced fractional posterior are introduced.
	The main results including bandwidth selection consistency and optimal posterior convergence rates are established in Section \ref{sec_main}.
	In Section \ref{sec_simul}, the performance of the proposed method is illustrated based on simulated data and real data analysis.
	\bch Concluding remarks and discussions are given in Section \ref{sec_disc}, while the proofs of main results and additional simulation results are provided in the Appendix.
	\verb|R| codes for implementation of our empirical results are available at \url{https://github.com/leekjstat/LANCE}.\ech

	\section{Preliminaries}\label{sec:prel}
	
	\subsection{Notation}
	For any $a$ and $b\in\bbR$, we denote $a \wedge b$ and $a\vee b$ as the minimum and maximum of $a$ and $b$, respectively.
	For any positive sequences $a_n$ and $b_n$, $a_n = o(b_n)$ denotes $a_n / b_n \lra 0$ as $n\to\infty$.
	We denote $a_n = O(b_n)$, or equivalently $a_n \lesssim b_n$, if there exists a constant $C>0$ such that $a_n < C b_n$ for all large $n$.	
	For any matrix $A = (a_{ij})\in \bbR^{p\times p}$, we denote $\lambda_{\min}(A)$ and $\lambda_{\max}(A)$ as the minimum and maximum eigenvalues, respectively. 
	\bch Furthermore, we define the matrix $\ell_\infty$-norm, Frobenius norm and element-wise maximum norm as follows:
	$\|A\|_\infty = \max_{1\le i \le p} \sum_{j=1}^p |a_{ij}|$, $\|A\|_F = ( \sum_{i=1}^p \sum_{j=1}^p a_{ij}^2  )^{1/2}$ and $\|A\|_{\max} = \max_{1\le i,j\le p} |a_{ij}|$.\ech
	We denote $IG(a,b)$ as the inverse-Gamma distribution with shape parameter $a>0$ and scale parameter $b>0$.

	\subsection{High-dimensional Gaussian Models}\label{subsec:model}
	Throughout the paper, we assume a high-dimensional setting with $p = p_n \ge n$, and that the variables have a known natural ordering.
	Specifically,  we  assume we observe   a sample of data  with sample size $n$ from a $p$-dimensional Gaussian model,
	\bean\label{model}
	X_1,\ldots, X_n \mid \Omega_n &\overset{i.i.d.}{\sim}&  N_p (0 ,  \Omega_n^{-1}),
	\eean
	where $\Omega_n = \sg_n^{-1} \in \bbR^{p\times p}$ is a precision matrix.
	For the rest of the paper, we use subscript $n$ for any $p\times p$ matrices to indicate that the dimension $p=p_n$ grows as $n\to\infty$.
	Let $\bfX_n = (X_1,\ldots, X_n)^T \in \bbR^{n\times p}$ be a data matrix, and $X_i =(X_{i1}, \ldots, X_{ip})^T \in \bbR^p$ for all $i=1,\ldots, n$.
	We denote the Cholesky factor and the diagonal matrix from the MCD as $A_n$ and $D_n$, respectively, i.e., $\Omega_n = (I_p - A_n )^T D_n^{-1}(I_p - A_n)$.
	Model \eqref{model} is related to a directed acyclic graph (DAG) model depending on the  sparsity pattern of $A_n$ \citep{van2013ell, lee2019minimax}, but we will not go into detail on DAG models.

	In this paper, we assume $A_n = (a_{jl})$ has a banded structure with {\it varying bandwidths}, $\{k_2,\ldots, k_p\}$, which satisfies $0\le k_j \le j-1$ and $\sum_{l : |j-l| > k_j} |a_{jl}|=0$ for each $j=2,\ldots,p$.
	Let $\tilde{X}_j \in \bbR^n$ and $\bfX_{j (k_j)} \in \bbR^{n \times k_j}$ be sub-matrices of $\bfX_n$ consisting of the $j$th and $(j- k_j),\ldots, (j-1)$th columns, respectively.
	Under the varying bandwidths assumption, model \eqref{model} can be represented as 
	\bea
	\tilde{X}_1 \mid d_1 &\sim& N_n( 0 , d_1 I_n) , \\
	\tilde{X}_j \mid \bfX_{j (k_j)} , a_j^{(k_j)} , d_j, k_j &\sim& N_n \big(  \bfX_{j (k_j)} a_j^{(k_j)} ,    d_j I_n  \big), \quad j=2,\ldots, p,
	\eea
	where $a_j^{(k_j)} = ( a_{jl} )_{(j- k_j) \le l \le (j-1)} \in \bbR^{k_j}$.
	The above representation implies that to predict the $j$th variable, $\tilde{X}_j$, it suffices to know its $k_j$-nearest predecessors, $\bfX_{j (k_j)}$.
	Thus, the varying bandwidths assumption of $A_n$ directly induces the {\it local dependence} structure between variables.
	As mentioned before, this is often natural given commonly encountered ordered variables in time series or genome data sets, and \bch a more realistic and flexible assumption than the common bandwidth assumption.\ech

	\subsection{Local Dependence Cholesky Prior}\label{subsec:prior}	
	To conduct Bayesian inference, we need to impose a prior on $A_n$ and $D_n$ as well as $k_j$, $j=2,\ldots, p$, which restricts $A_n$ to have a local dependence structure.
	We propose the following prior 
	\bean\label{prior}
	\begin{split}
		a_j^{(k_j)} \mid d_j, k_j \,\,&\overset{ind.}{\sim}\,\,   N_{k_j} \Big(  \what{a}_j^{(k_j)} ,    \frac{d_j}{\gamma}\big( \bfX_{j (k_j)}^T \bfX_{j (k_j)} \big)^{-1}     \Big) , \,\, j=2,\ldots, p,   \\
		\pi(d_j)  \,\,&\propto\,\,  d_j^{-\nu_0/2 - 1}, \,\, j=1,\ldots,p ,  \\
		\pi(k_j) \,\,&\propto\,\,  c_1^{- k_j} p^{-c_2 k_j}  I(0 \le k_j \le  \{R_j \wedge(j-1)\}  ) , \,\, j=2,\ldots, p ,
	\end{split}
	\eean
	for some positive constants $\gamma , c_1,c_2, R_2,\ldots, R_p$ and $\nu_0$, where $\what{a}_j^{(k_j)}  = (\bfX_{j (k_j)}^T \bfX_{j (k_j)})^{-1}  \bfX_{j (k_j)}^T \tilde{X}_j$.
	We call the above prior LANCE (LocAl depeNdence CholEsky) prior.
	The conditional prior for $a_j^{(k_j)}$ is a version of the Zellner's g-prior \citep{zellner1986assessing} and depends on the data.
	By using the prior $\pi(a_{j}^{(k_j)} \mid d_j, k_j)$ with sub-Gaussian tails and data-dependent center, we can obtain theoretical properties of posteriors without having to use priors with heavier tails or introducing redundant upper bound conditions on $\|a_j^{(k_j)} \|_2$ as discussed by \cite{lee2019minimax}.
	The prior for $d_j$ is improper and includes the Jeffreys' prior \citep{jeffreys1946invariant}, $\pi(d_j) \propto d_j^{-1}$, as a special case.
	In fact, the proper prior $d_j \sim IG(\nu_0/2 , \nu_0')$, for some constant $\nu_0'>0$, can be used.
	However, we proceed with the above improper prior to reduce the number of hyperparameters.

	For posterior inference, we suggest using the fractional posterior, which has received increased attention recently \citep{martin2014asymptotically,martin2017empirical,lee2019minimax}.
	Let $L(A_n, D_n)$ be the likelihood function of model \eqref{model}.	
	For a given constant $0<\alpha<1$, the $\alpha$-fractional posterior is defined by 	$\pi_\alpha( A_n , D_n \mid \bfX_n)  \propto L(A_n, D_n)^\alpha \pi (A_n, D_n)$.
	Thus, $\alpha$-fractional posterior is a posterior distribution updated with the likelihood function raised to a power of $\alpha$ instead of the usual likelihood.
	We denote the $\alpha$-fractional posterior by $\pi_\alpha(\cdot \mid \bfX_n)$ to indicate the use of $\alpha$-fractional likelihood.
	Theoretical properties of fractional posterior can  often  be established under weaker conditions compared with the usual posterior \citep{bhattacharya2019bayesian}.
	Under our model, the $\alpha$-fractional posterior has the following closed form:
	\bean\label{posteriors}
	\begin{split}
		a_{j}^{(k_j)} \mid d_j, k_j, \bfX_n \,\,&\overset{ind.}{\sim}\,\,  N_{k_j} \Big(  \what{a}_j^{(k_j)} ,    \frac{d_j}{\alpha+\gamma}\big( \bfX_{j (k_j)}^T \bfX_{j (k_j)} \big)^{-1}     \Big) , \,\, j=2,\ldots, p,   \\
		d_j \mid k_j , \bfX_n \,\,&\overset{ind.}{\sim}\,\, IG \Big(  \frac{\alpha n + \nu_0}{2} , \, \frac{\alpha n}{2}\what{d}_{j}^{(k_j)}  \Big) , \,\, j=1,\ldots, p , \\
		\pi_\alpha(k_j \mid \bfX_n)  \,\,&\propto\,\,  \pi(k_j)  \Big( 1+ \frac{\alpha}{\gamma}  \Big)^{- \frac{k_j}{2}}  \big(  \what{d}_{j}^{(k_j)} \big)^{- \frac{\alpha n + \nu_0}{2}}  , \,\, j=2,\ldots,p  ,
	\end{split}
	\eean
	where $\what{d}_{j}^{(k_j)} = n^{-1} \tilde{X}_j^T ( I_n - \tilde{P}_{j k_j}) \tilde{X}_j$ and $\tilde{P}_{j k_j} = \bfX_{j (k_j)} (\bfX_{j (k_j)}^T \bfX_{j (k_j)})^{-1} \bfX_{j (k_j)}^T$.
	Based on \eqref{posteriors}, one can notice that posterior inference for each $j=2,\ldots, p$ is parallelizable.
	Furthermore, a MCMC algorithm is not needed because direct posterior sampling from \eqref{posteriors} is possible.
	Note that the three posterior distributions in \eqref{posteriors} form the joint posterior distributions of $(a_j^{(k_j)} , d_j, k_j)$, which are independent for each $j=1,\ldots, p$.
	Thus,  LANCE prior leads to a fast and scalable posterior inference even in high-dimensions.

	\section{Main Results}\label{sec_main}
	
	In this section, we show that LANCE prior accurately unravels the local dependence structure.
	More specifically, it is proved that LANCE prior attains bandwidth selection consistency  for all bandwidths and nearly minimax posterior convergence rates for Cholesky factors.	
	To obtain desired asymptotic properties of posteriors, we assume the following conditions:
	\begin{itemize}
		\item[(A1)] $\lambda_{\max}(\Omega_{0n}) \log p /\lambda_{\min}(\Omega_{0n})  = o(n)$.
		
		\item[(A2)] For some constant $M_{\rm bm} > c_2 +1$, 
		\bea
		\min_{(j,l): a_{0,jl} \neq 0 } \frac{a_{0,jl}^2}{d_{0j}}  &\ge& \frac{10 M_{\rm bm} \,\, \lambda_{\max}(\Omega_{0n}) }{(\alpha + \nu_0/n )(1- \alpha - \nu_0/n) }  \frac{\log p}{n} .
		\eea
		
		\item[(A3)]  $k_{0j} \le R_j$ for any $j=2,\ldots, p$,  $\max_j R_j  \log p \le n (1+ 5\sqrt{\epsilon}) / \{C_{\rm bm}(1- 2 \epsilon)^2\} $, where $\epsilon = \{(1-\alpha)/10 \}^2$ and $C_{\rm bm} =  10 M_{\rm bm} / \{(\alpha + \nu_0/n )(1- \alpha - \nu_0/n)\} $

		\item[(A4)] The hyperparameters satisfy $\nu_0 = O(1)$, $\gamma =O(1)$, $c_1=O(1)$, $c_2>1$ and $0.6 \le \alpha <1$.
	\end{itemize}
	
	Condition (A1) allows the condition number $\lambda_{\max}(\Omega_{0n})/\lambda_{\min}(\Omega_{0n})$ to grow to infinity at a rate slower than $n/ \log p$.
	\bch This condition is weaker than the bounded eigenvalue conditions\ech used in \cite{banerjee2015bayesian}, \cite{khare2019scalable} and \cite{lee2019minimax}, which assume $c< \lambda_{\min}(\Omega_{0n}) \le \lambda_{\max}(\Omega_{0n}) < C$ for some constants $c$ and $C>0$.
	\bch We note here that we require the bounded eigenvalue conditions for the (nearly) minimaxity of the posterior convergence rates in Theorems \ref{thm_post_conv} and \ref{thm_post_conv_nobetamin}.\ech

	Condition (A2) determines the lower bound for the nonzero signals, $a_{0,jl}^2/ d_{0j}$.
	This is called the {\it beta-min} condition.
	It has been well known that the beta-min condition is essential for variable selection \citep{buhlmann2011statistics,martin2017empirical} and support recovery for sparse matrices \citep{yu2017learning,cao2019posterior}.
	Condition (A2) implies that the rate of nonzero $a_{0,jl}^2/ d_{0j}$ should, at least, be $\lambda_{\max}(\Omega_{0n}) \log p /n$, which has the same rate with $\log p/n$ under the bounded eigenvalue condition.
	
	Condition (A3), together with condition (A4), provides an upper bound for the true bandwidth $k_{0j}$: they allow the maximum bandwidth, $\max_j k_{0j}$, to grow to infinity at a rate not faster than $n/\log p$.
	The rest of condition (A4) presents sufficient conditions for hyperparameters to obtain theoretical properties.
	The conditions on $c_1$ and $c_2$ control the strength of penalty for large models, i.e., large bandwidths.
	Note that condition $c_2>1$ is weaker than the condition $c_2 \ge 2$ used in \cite{lee2019minimax}, which implies that the local dependence assumption requires a weaker penalty compared with arbitrary sparsity patterns.
	In Section \ref{sec_simul}, we will give a practical guidance for the choice of hyperparameters.

	\begin{theorem}[Bandwidth selection consistency]\label{thm_bandwidth_sel}
		Consider model \eqref{model} and LANCE prior \eqref{prior}.
		Under conditions (A1)--(A4), we have
		\bea
		\bbE_0 \Big\{  \pi_\alpha \big(  k_j \neq k_{0j} \text{ for at least one } 2\le j \le p  \mid \bfX_n \big)  \Big\} &\lra&  0  \quad \text{ as } n\to\infty.
		\eea
	\end{theorem}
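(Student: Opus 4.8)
The plan is to exploit that, under the LANCE prior, the fractional posterior in \eqref{posteriors} factorizes across $j$, so the triples $(a_j^{(k_j)},d_j,k_j)$ are mutually independent over $j$. A union bound then gives
\[
\pi_\alpha\big(k_j \neq k_{0j} \text{ for some } 2\le j \le p \mid \bfX_n\big) \le \sum_{j=2}^p \pi_\alpha\big(k_j \neq k_{0j} \mid \bfX_n\big),
\]
so it suffices to show $\sum_{j=2}^p \bbE_0\{\pi_\alpha(k_j \neq k_{0j}\mid \bfX_n)\}\lra 0$. For fixed $j$ I would bound $\pi_\alpha(k_j\neq k_{0j}\mid\bfX_n)$ by the sum over $k\neq k_{0j}$ of the posterior odds against the truth, which by the last line of \eqref{posteriors} is
\[
\frac{\pi(k)}{\pi(k_{0j})}\Big(1+\tfrac{\alpha}{\gamma}\Big)^{-(k-k_{0j})/2}\Big(\what{d}_j^{(k)}/\what{d}_j^{(k_{0j})}\Big)^{-(\alpha n+\nu_0)/2},
\]
where the prior ratio is simply $\pi(k)/\pi(k_{0j})=(c_1 p^{c_2})^{-(k-k_{0j})}$ since $k_{0j}\le R_j$ by (A3). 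The argument splits according to whether the candidate model over- or under-fits the true support.

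In the overfitting regime $k>k_{0j}$ there is no omitted-variable bias: writing $\tilde X_j=\bfX_{j(k_{0j})}a_{0j}^{(k_{0j})}+\varepsilon_j$ with $\varepsilon_j\sim N_n(0,d_{0j}I_n)$, both residual sums of squares are quadratic forms in $\varepsilon_j$, and by Cochran's theorem $\what{d}_j^{(k_{0j})}/\what{d}_j^{(k)}=1+(k-k_{0j})F/(n-k)$ for an $F_{k-k_{0j},\,n-k}$ variable $F$. Chi-square tail bounds then give, with high probability and uniformly in $k$, a likelihood gain of at most $e^{C(k-k_{0j})}$ for some constant $C$. Since the prior contributes $(c_1p^{c_2})^{-(k-k_{0j})}=e^{-(k-k_{0j})(c_2\log p+O(1))}$, the penalty $c_2\log p$ swamps this $O(1)$ gain, so each overfitting odds is at most $p^{-(c_2-o(1))(k-k_{0j})}$. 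Summing the geometric series over $k>k_{0j}$ and then over $j$ yields a bound of order $p^{\,1-c_2+o(1)}$, which vanishes precisely because (A4) assumes $c_2>1$.

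In the underfitting regime $k<k_{0j}$ the prior now rewards the smaller model, but the omitted true signal inflates $\what{d}_j^{(k)}$. Decomposing $n\what{d}_j^{(k)}=\|(I_n-\tilde P_{jk})\bfX_{j(k_{0j})}a_{0j}^{(k_{0j})}\|_2^2+(\text{noise})$, the first term is the squared norm of the part of the true mean orthogonal to the fitted design. Controlling the random Gram matrices $n^{-1}\bfX_{j(k)}^T\bfX_{j(k)}$ through their eigenvalues---related to $\lambda_{\min}(\Omega_{0n})$ and $\lambda_{\max}(\Omega_{0n})$ via (A1)---and invoking the beta-min bound (A2), I would show $\what{d}_j^{(k)}/\what{d}_j^{(k_{0j})}\ge 1+c\,C_{\rm bm}\lambda_{\max}(\Omega_{0n})(k_{0j}-k)\log p/n$ with high probability, using that $\what{d}_j^{(k_{0j})}$ concentrates near $d_{0j}$. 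Hence the likelihood penalty decays at least like $p^{-\Theta(M_{\rm bm})(k_{0j}-k)}$, and the calibration $M_{\rm bm}>c_2+1$ in (A2) is exactly what forces this decay to overcome the favorable prior growth $p^{c_2(k_{0j}-k)}$ with a surplus exceeding $(k_{0j}-k)$; summing over $k<k_{0j}$ and over $j$ again gives a vanishing bound.

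The main obstacle is securing the \emph{uniform} control needed for the two regimes to hold simultaneously across all $p-1$ coordinates and all admissible bandwidths $0\le k\le R_j\wedge(j-1)$. I would therefore work on a single event on which every relevant chi-square/$F$ statistic concentrates and every Gram matrix $n^{-1}\bfX_{j(k)}^T\bfX_{j(k)}$ has eigenvalues pinched in terms of $\Omega_{0n}$, and verify by Gaussian concentration and a union bound over the $\sum_j(R_j\wedge(j-1))$ pairs $(j,k)$ that this event has probability tending to one---here the budget $\max_j R_j\log p\lesssim n$ from (A3) together with (A1) absorbs the union bound, while on the complement the posterior is bounded trivially by one and contributes $o(1)$ to $\bbE_0$. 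The genuinely delicate quantitative step is the underfitting lower bound on $\what{d}_j^{(k)}/\what{d}_j^{(k_{0j})}$: it must convert the per-coefficient beta-min signal into an aggregate residual-variance gap that survives random-design fluctuations uniformly in $j$ and $k$, and it is there that the factor-of-ten slack in (A2) and condition (A1) do the real work.
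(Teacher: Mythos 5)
Your union-bound skeleton and your treatment of the underfitting regime $k<k_{0j}$ track the paper's proof closely (events pinching the Gram-matrix eigenvalues, concentration of $\what{d}_j^{(k_{0j})}$, and the beta-min condition converted into a residual-variance gap via a lower bound on $\|(I_n-\tilde P_{jk})\bfX_{j(k_{0j})}a_{0j}^{(k_{0j})}\|_2^2$; the calibration $M_{\rm bm}>c_2+1$ is indeed what closes that case). One slip there: your claimed ratio bound $\what{d}_j^{(k)}/\what{d}_j^{(k_{0j})}\ge 1+c\,C_{\rm bm}\lambda_{\max}(\Omega_{0n})(k_{0j}-k)\log p/n$ carries a spurious $\lambda_{\max}(\Omega_{0n})$: the $\lambda_{\max}(\Omega_{0n})$ in (A2) is cancelled by the factor $\lambda_{\max}(\Omega_{0n})^{-1}$ coming from $\lambda_{\min}(n^{-1}\bfX_{j(k_{0j})}^T\bfX_{j(k_{0j})})$, leaving a gap of order $C_{\rm bm}(k_{0j}-k)\log p/n$ — which is still enough, so this is cosmetic.

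The genuine gap is in the overfitting regime $k>k_{0j}$. There is no constant $C$ for which the event $\bigl\{(\what{d}_j^{(k_{0j})}/\what{d}_j^{(k)})^{(\alpha n+\nu_0)/2}\le e^{C(k-k_{0j})}\ \text{for all } j,k\bigr\}$ has probability tending to one: for $k=k_{0j}+1$ the gain is roughly $\exp(\alpha W_j)$ with $W_j\sim\chi^2_1$ independent across $j$, so each coordinate violates the bound with a fixed positive probability, and with $p-1$ coordinates the event fails almost surely for large $p$. If you instead inflate the threshold to $W_{jk}\lesssim \beta (k-k_{0j})\log p$ with $\beta>1$ so that the union bound over the $\sum_j R_j$ pairs closes, then on that event the likelihood gain is only bounded by $p^{c\alpha\beta(k-k_{0j})}$ with $c\alpha\beta>1$, and the prior penalty $p^{-c_2(k-k_{0j})}$ cannot absorb it under the paper's weak hypothesis $c_2>1$ (with $\alpha\ge 0.6$); a moment-generating-function rescue also fails since $\bbE\,e^{\alpha W}=\infty$ for $\alpha\ge 1/2$. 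The paper avoids any uniform event here by bounding $\bbE_0$ of the posterior odds \emph{directly}: it uses the exact distributional fact $\what{d}_j^{(k)}/\what{d}_j^{(k_{0j})}\sim Beta\bigl((n-k)/2,(k-k_{0j})/2\bigr)$, whose negative moment of order $(\alpha n+\nu_0)/2$ is finite precisely because $\alpha<1$ and $R_j\le n(1-\alpha)/2$ (this is where the fractional likelihood and (A3)--(A4) do the real work), and equals a Gamma-function ratio bounded by $(2/(1-\alpha))^{(k-k_{0j})/2}$ — a constant per extra parameter. Then $\sum_j\sum_{k>k_{0j}}\bbE_0(\text{odds})\lesssim \sum_j p^{-c_2}=p^{1-c_2}\to 0$ needs only $c_2>1$. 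Your proposal as written cannot reach this conclusion without replacing the high-probability argument in the overfitting case by this expectation computation (or an equivalent exact negative-moment bound).
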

	
	Theorem \ref{thm_bandwidth_sel} says that the posterior probability of incorrectly estimating local dependence structure, i.e., bandwidths,  converges to zero in probability as $n\to\infty$.
	Thus, the proposed method can consistently recover local dependence structure for each variable.

	We compare the above result with existing theoretical results in other work.
	First of all, we note here that \cite{khare2019scalable}, \cite{cao2019posterior} and \cite{lee2019minimax} assumed arbitrary dependence structures for Cholesky factors, thus their methods are not tailored to local dependence structure considered in this paper.
	Although \cite{bickel2008regularized}, \cite{banerjee2014posterior} and \cite{lee2017estimating} focused on banded Cholesky factors, which result in banded precision matrices, they assumed a common dependence structure, i.e., a common bandwidth, for each variable, and did not provide bandwidth selection consistency.
	
	To the best of our knowledge, the state-of-the-art theoretical result for estimating local dependence structure is obtained by \cite{yu2017learning}.
	They proposed a penalized likelihood approach and obtained bandwidth selection consistency in a high-dimensional setting.
	They assumed that the eigenvalues of $\Omega_{0n}$ lie in $[\kappa^{-2}, \kappa^2]$ and the beta-min condition, 
	\bea
	\min_{(j,l): a_{0,jl} \neq 0 } \frac{|a_{0,jl}|}{\sqrt{d_{0j}} } &\ge& 8 \rho^{-1} ( 4 \max_j \| \sg_{0n, k_{0j}}^{-1} \|_\infty + 5 \kappa^2 )  \sqrt{2 \|D_{0n}\|_\infty \frac{\log p}{n} } ,
	\eea
	for some constants $\kappa >1$ and $\rho \in (0,1]$, where $\sg_{0n, k_{0j}} = ( \sigma_{0, il} )_{j -k_{0j} \le i, l \le j}$ denotes the sub-matrix of the true covariance matrix.
	Note that these conditions are more restrictive than our conditions (A1) and (A2).
	For example,  $\| \sg_{0n, k_{0j}}^{-1} \|_\infty = O( k_{0j}^{1/2} )$  holds under the bounded eigenvalue condition, so the beta-min condition in \cite{yu2017learning} implies that the minimum nonzero $a_{0,jl}^2/d_{0j}$ is bounded below by $\max_j k_{0j} \log p /n$ with respect to a constant multiple; in contrast, the rate of the lower bound in condition (A2) is $\log p /n$ under the bounded eigenvalue condition.
	Furthermore, they assumed the so-called {\it irrepresentable} condition, 
	$$ \max_{2\le j\le p} \max_{1\le l \le j -k_{0j}-1 }  \| ( \sg_{0n} )_{l, k_{0j}} \sg_{0n, k_{0j}}^{-1} \|_1  \le  \frac{6(1-\rho)}{\pi^2} ,$$
	which is typically required for the lasso type methods with a random design matrix (e.g., see \cite{wainwright2009sharp} and \cite{khare2019scalable}).
	\cite{yu2017learning} proved the exact signed support recovery property under the above conditions and $n > \rho^{-2} \|D_{0n}\|_\infty \kappa^2$ $(12\pi^2 \max_j k_{0j} +32 ) \log p$.
	Note that condition (A3) together with (A4) implies $n > C\max_j k_{0j} \log p$ for some constant $C>0$.
	Hence,  stronger conditions  are also required  by \cite{yu2017learning}  to establish bandwidth selection consistency compared with Theorem \ref{thm_bandwidth_sel}.

	Next, we show that LANCE prior achieves nearly minimax posterior convergence rates for Cholesky factors.
	The posterior convergence rates are obtained with or without beta-min condition (A2).
	\bch Under the beta-min condition, Theorem \ref{thm_post_conv} presents posterior convergence rates based on various matrix norms.\ech
	
	\begin{theorem}[Posterior convergence rates with beta-min condition]\label{thm_post_conv}
		Suppose that the conditions in Theorem \ref{thm_bandwidth_sel} hold.
		\bch If $k_0 \log p = o(n)$, where $k_0 = \max_j k_{0j}$,  we have
		\bea
		\bbE_0 \Big[ \pi_\alpha \Big\{   \|A_n - A_{0n}\|_{\max} \ge K_{\rm chol}  \frac{\lambda_{\max}(\Omega_{0n})^2 }{\lambda_{\min}(\Omega_{0n})^2 }  \Big(  \frac{k_0+ \log p }{n} \Big)^{1/2}   \mid\bfX_n  \Big\}  \Big]  &=&  o(1)  , \\
		\bbE_0 \Big[ \pi_\alpha \Big\{   \|A_n - A_{0n}\|_\infty \ge K_{\rm chol} \frac{\lambda_{\max}(\Omega_{0n})^2 }{\lambda_{\min}(\Omega_{0n})^2 }  \sqrt{k_0} \Big(  \frac{k_0+ \log p }{n} \Big)^{1/2}   \mid\bfX_n  \Big\}  \Big]  &=&  o(1)  , \\
		\bbE_0 \Big[ \pi_\alpha \Big\{   \|A_n - A_{0n}\|_F^2 \ge K_{\rm chol} \frac{\lambda_{\max}(\Omega_{0n})^2 }{\lambda_{\min}(\Omega_{0n})^2 }  \frac{  \sum_{j=2}^p (k_{0j}+ \log j )  }{n}     \mid\bfX_n  \Big\}  \Big]  &=&  o(1) ,
		\eea
		for some constant $K_{\rm chol}>0$ not depending on unknown parameters.\ech
	\end{theorem}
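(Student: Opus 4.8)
The plan is to first collapse the problem onto the correctly selected model and then to bound the Cholesky error one row at a time, since all three matrix norms in the statement can be read off from the row-wise $\ell_2$ errors $\|a_j^{(k_{0j})} - a_{0j}^{(k_{0j})}\|_2$. By Theorem \ref{thm_bandwidth_sel}, the fractional posterior assigns probability tending to one (in $\bbE_0$) to the event $\{k_j = k_{0j}\text{ for all }2\le j\le p\}$, and on the complementary event the contribution to each $\bbE_0\{\pi_\alpha(\cdots\mid\bfX_n)\}$ is $o(1)$; I would therefore work throughout on the correctly-selected event. There the closed-form posterior \eqref{posteriors} reduces, for each $j$, to the Gaussian law $a_j^{(k_{0j})}\mid d_j,\bfX_n \sim N_{k_{0j}}(\what{a}_j^{(k_{0j})},\, \tfrac{d_j}{\alpha+\gamma}(\bfX_{j(k_{0j})}^T\bfX_{j(k_{0j})})^{-1})$ together with the inverse-gamma law for $d_j$, which lets me avoid generic fractional-posterior machinery and compute tail probabilities directly. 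I would split each row error as
\[
a_j^{(k_{0j})} - a_{0j}^{(k_{0j})} = \big(a_j^{(k_{0j})} - \what{a}_j^{(k_{0j})}\big) + \big(\what{a}_j^{(k_{0j})} - a_{0j}^{(k_{0j})}\big),
\]
the first a mean-zero posterior fluctuation and the second the least-squares error $(\bfX_{j(k_{0j})}^T\bfX_{j(k_{0j})})^{-1}\bfX_{j(k_{0j})}^T \tilde{\varepsilon}_j$ with $\tilde{\varepsilon}_j\sim N(0,d_{0j}I_n)$ independent of the predecessor design by Gaussianity.

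The target is a high-probability per-row bound of the form $\|a_j^{(k_{0j})} - a_{0j}^{(k_{0j})}\|_2^2 \lesssim \tfrac{\lambda_{\max}(\Omega_{0n})^2}{\lambda_{\min}(\Omega_{0n})^2}\,\tfrac{k_{0j}+\log j}{n}$, from which the three assertions follow by elementary norm comparisons. Indeed $\|A_n-A_{0n}\|_{\max} = \max_j\|a_j^{(k_{0j})}-a_{0j}^{(k_{0j})}\|_\infty \le \max_j\|a_j^{(k_{0j})}-a_{0j}^{(k_{0j})}\|_2$ gives the first rate using $k_{0j}+\log j \le k_0+\log p$; $\|A_n-A_{0n}\|_\infty = \max_j\|a_j^{(k_{0j})}-a_{0j}^{(k_{0j})}\|_1 \le \max_j \sqrt{k_{0j}}\,\|a_j^{(k_{0j})}-a_{0j}^{(k_{0j})}\|_2$ produces the extra $\sqrt{k_0}$; and $\|A_n-A_{0n}\|_F^2 = \sum_{j=2}^p\|a_j^{(k_{0j})}-a_{0j}^{(k_{0j})}\|_2^2$ yields the Frobenius rate directly. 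The common prefactor $\lambda_{\max}(\Omega_{0n})^2/\lambda_{\min}(\Omega_{0n})^2$ stated for all three norms is a convenient (not necessarily tight) uniform bound, since for the $\max$ norm a single power of the condition number already suffices and is absorbed.

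It remains to establish the per-row bound, and this is where the main work and the condition-number factors enter. Conditionally on the design, both the posterior-fluctuation and the least-squares terms are Gaussian, and I would bound each by splitting off the operator norm of the inverse Gram matrix: writing $\|(\bfX_{j(k_{0j})}^T\bfX_{j(k_{0j})})^{-1}\bfX_{j(k_{0j})}^T\tilde{\varepsilon}_j\|_2 \le \lambda_{\max}\big((\bfX_{j(k_{0j})}^T\bfX_{j(k_{0j})})^{-1}\big)\,\|\bfX_{j(k_{0j})}^T\tilde{\varepsilon}_j\|_2$ and controlling the energy $\|\bfX_{j(k_{0j})}^T\tilde{\varepsilon}_j\|_2^2$ by a $\chi^2$-concentration argument. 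The per-row exceedance probability should be chosen to decay like $j^{-c}$ so that the union over $j=2,\ldots,p$ and the passage to $\bbE_0$ are summable, which is precisely what converts the row complexity into the additive $\log j$. The inverse-gamma concentration of $d_j$ around $\what{d}_j^{(k_j)}$, together with concentration of $\what{d}_j^{(k_j)}$ around $d_{0j}$, lets me replace the random $d_j$ by $d_{0j}\le \lambda_{\max}(\sg_{0n})=\lambda_{\min}(\Omega_{0n})^{-1}$.

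The crux, which I expect to be the principal obstacle, is the uniform control of the sample Gram matrices. I would show that with $\bbP_0$-probability tending to one, $n^{-1}\bfX_{j(k_{0j})}^T\bfX_{j(k_{0j})}$ has all its eigenvalues trapped between constant multiples of $\lambda_{\min}(\sg_{0n,k_{0j}})$ and $\lambda_{\max}(\sg_{0n,k_{0j}})$ \emph{simultaneously} for all $j$. Since $\sg_{0n,k_{0j}}$ is a principal submatrix of $\sg_{0n}$, Cauchy interlacing confines these population eigenvalues to $[\lambda_{\max}(\Omega_{0n})^{-1},\,\lambda_{\min}(\Omega_{0n})^{-1}]$, so that $\lambda_{\max}\big((\bfX_{j(k_{0j})}^T\bfX_{j(k_{0j})})^{-1}\big)\lesssim \lambda_{\max}(\Omega_{0n})/n$ and $\lambda_{\max}(\bfX_{j(k_{0j})}^T\bfX_{j(k_{0j})})\lesssim n/\lambda_{\min}(\Omega_{0n})$. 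A Gaussian sub-exponential matrix-deviation bound, valid uniformly over the $p-1$ rows under the assumption $k_0\log p = o(n)$, is what keeps the smallest eigenvalue bounded away from zero; feeding these two eigenvalue bounds into the operator-norm splitting above is exactly what produces the two powers of the condition number in the per-row estimate. Once this uniform Gram-matrix control is in place, the remaining estimates are routine concentration arguments assembled row by row.
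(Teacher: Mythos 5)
Your proposal follows the same overall route as the paper's proof: invoke Theorem \ref{thm_bandwidth_sel} to restrict the posterior to the event $\bigcap_{j=2}^p\{k_j=k_{0j}\}$, reduce all three matrix norms to row-wise $\ell_2$ errors, split each row error into the posterior fluctuation $a_j^{(k_{0j})}-\what{a}_j^{(k_{0j})}$ plus the least-squares error $\what{a}_j^{(k_{0j})}-a_{0j}^{(k_{0j})}$, and work on a high-probability design event on which the Gram matrices $n^{-1}\bfX_{j(k_{0j})}^T\bfX_{j(k_{0j})}$ have eigenvalues comparable to the population ones (via interlacing, trapped in $[\lambda_{\max}(\Omega_{0n})^{-1},\lambda_{\min}(\Omega_{0n})^{-1}]$). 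One genuine difference is in your favor: you bound the least-squares term directly, using independence of $\tilde{\varepsilon}_j$ from the design and the splitting $\|(\bfX_{j(k_{0j})}^T\bfX_{j(k_{0j})})^{-1}\bfX_{j(k_{0j})}^T\tilde{\varepsilon}_j\|_2\le\lambda_{\max}\{(\bfX_{j(k_{0j})}^T\bfX_{j(k_{0j})})^{-1}\}\,\|\bfX_{j(k_{0j})}^T\tilde{\varepsilon}_j\|_2$ with $\chi^2$ concentration, which costs a single power of the condition number; the paper instead goes through $\|\what{\V}^{-1}-\V^{-1}\|\le\|\what{\V}^{-1}\|\,\|\V^{-1}\|\,\|\what{\V}-\V\|$ and pays two powers. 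Your version is tighter and still implies the stated rates.

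There is, however, a genuine gap in how you handle the union over rows. You set the per-row deviation at scale $(k_{0j}+\log j)/n$ and assert that per-row exceedance probabilities decaying ``like $j^{-c}$'' make the union over $j=2,\ldots,p$ summable. Summability is not the right criterion: the theorem requires $\sum_{j=2}^p \bbE_0\{\pi_\alpha(\text{row } j \text{ fails}\mid\bfX_n)\}=o(1)$ as $n\to\infty$, whereas $\sum_{j\ge 2}j^{-c}$ is a constant bounded below by $2^{-c}>0$. Worse, with thresholds at scale $(k_{0j}+\log j)/n$ the individual row terms do not decay with $n$ at all: the standardized posterior fluctuation has a $\chi^2_{k_{0j}}$ law given $\bfX_n$, so on the good design event the posterior probability of exceeding your threshold is of order $e^{-c(k_{0j}+\log j)}$, a positive constant independent of $n$ under bounded eigenvalue conditions, and the same is true of the $\bbP_0$-tail of the least-squares error. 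Hence with your thresholds the union bound stays bounded away from zero, and this breaks all three assertions; noting $k_{0j}+\log j\le k_0+\log p$ does not repair it, because what is missing is simultaneity of the per-row bounds over $j$. The repair is exactly what the paper does: take every per-row deviation at scale $(k_{0j}+\log p)/n$, so that each row fails with probability $\lesssim p^{-c}$ for some $c>1$ and the union over the $p-1$ rows is $\lesssim p^{1-c}=o(1)$; the $\sum_{j=2}^p(k_{0j}+\log j)$ Frobenius rate is then recovered not from per-row $\log j$ bounds but from the elementary inequality $(p-1)\log p\le 4\sum_{j=2}^p\log j$, which shows the stated Frobenius threshold dominates $\tfrac{1}{4}\sum_{j=2}^p(k_{0j}+\log p)/n$. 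With that modification the rest of your argument goes through.
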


	Estimating each row of Cholesky factor $A_n$ can be considered as a linear regression problem with a random design matrix.
	By assuming beta-min condition (A2), we can use the selection consistency result in Theorem \ref{thm_bandwidth_sel}, although a beta-min condition is usually not essential for obtaining convergence rates.
	With a cost of the beta-min condition, we can focus only on the set $a_j^{(k_j)} =a_j^{(k_{0j})}$ for all $j$ in the posterior with high probability tending to 1.
	Then the above posterior convergence rates for various matrix norms boil down to those related to $\| a_j^{(k_{0j})} - a_{0j}^{(k_{0j})}\|$ for various vector norms $\|\cdot\|$, which makes the problem simpler.

	\bch 
	Assume that $\epsilon_0\le \lambda_{\min}(\Omega_{0n})\le \lambda_{\max}(\Omega_{0n})\le \epsilon_0^{-1}$ for some small constant $0<\epsilon_0 <1/2$.
	Then, the obtained posterior convergence rates in Theorem \ref{thm_post_conv} under the matrix $\ell_\infty$-norm and Frobenius norm are minimax if $\log p  = O(k_0)$ and $\log j = O(k_{0j})$ for all $j=2,\ldots, p$, respectively, by Theorem 3.3 in \cite{lee2019minimax}.
	Therefore, the above posterior convergence rates are nearly or exactly minimax depending on the dimensionality $p$ and bandwidths $k_{0j}$ under bounded eigenvalue conditions on $\Omega_{0n}$.
	\ech

	Now, we establish posterior convergence rate of Cholesky factors without beta-min condition (A2).
	To obtain the desired result, we consider a {\it modified} LANCE prior using $d_j \sim IG(\nu_0/2 , \nu_0')$ for some constant $\nu_0'>0$ instead of $\pi(d_j) \propto d_j^{-\nu_0/2-1}$ in \eqref{prior}.
	This modification is mainly used to derive a lower bound of the likelihood ratio appearing in the denominator of posteriors (see Lemma 7.1 in \cite{lee2019minimax}).
	Theorem \ref{thm_post_conv_nobetamin} shows the posterior convergence rate under various matrix norms.

	\begin{theorem}[Posterior convergence rates without beta-min condition]\label{thm_post_conv_nobetamin}
		Consider model \eqref{model} and the modified LANCE prior described above.
		Suppose that the conditions (A1), (A3) and (A4) hold.
		\bch If $ k_{0}\log p = o(n)$, $\lambda_{\max}(\Omega_{0n})/ \lambda_{\min}(\Omega_{0n})= O(p)$ and $\lambda_{\max}(\Omega_{0n})= o(n)$, we have
		\bea
		\bbE_0 \Big[ \pi_\alpha \Big\{   \|A_n - A_{0n}\|_{\max} \ge K_{\rm chol}  \Big( \frac{ \lambda_{\max}(\Omega_{0n}) }{\lambda_{\min}(\Omega_{0n})}  \frac{k_0\log p }{n} \Big)^{1/2}   \mid\bfX_n  \Big\}  \Big]   &=&  o(1)  , \\
		\bbE_0 \Big[ \pi_\alpha \Big\{   \|A_n - A_{0n}\|_\infty \ge K_{\rm chol} {k_0} \Big( \frac{ \lambda_{\max}(\Omega_{0n}) }{\lambda_{\min}(\Omega_{0n})}  \frac{\log p }{n} \Big)^{1/2}   \mid\bfX_n  \Big\}  \Big]  &=&  o(1)  ,  \\
		\bbE_0 \Big[ \pi_\alpha \Big\{   \|A_n - A_{0n}\|_F^2 \ge K_{\rm chol}  \frac{ \lambda_{\max}(\Omega_{0n}) }{\lambda_{\min}(\Omega_{0n})} \frac{  \sum_{j=2}^p k_{0}\log j   }{n}     \mid\bfX_n  \Big\}  \Big]   &=&  o(1) , 
		\eea
		for some constant $K_{\rm chol}>0$ not depending on unknown parameters.\ech
	\end{theorem}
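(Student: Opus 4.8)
The plan is to exploit the closed-form fractional posterior \eqref{posteriors} and its factorization across $j$, reducing the analysis to $p-1$ independent random-design linear regressions. Since $\pi_\alpha$ is bounded by one, it suffices to produce a data event $\calE_n$ with $\bbP_0(\calE_n)\ra1$ on which each displayed posterior probability is $o(1)$ deterministically; then $\bbE_0[\pi_\alpha(\cdot)]\le\bbP_0(\calE_n^c)+\sup_{\bfX_n\in\calE_n}\pi_\alpha(\cdot)\ra0$. I would take $\calE_n$ to be the event on which every sample Gram matrix $\bfX_{j(k)}^T\bfX_{j(k)}/n$ has eigenvalues within a constant factor of those of the population submatrix $\sg_{0n,k}$, uniformly over $2\le j\le p$ and $0\le k\le R_j\wedge(j-1)$, and on which the cross terms between the predecessors and the regression noise are of order $\{\lambda_{\max}(\Omega_{0n})\log p/(\lambda_{\min}(\Omega_{0n})n)\}^{1/2}$ per coordinate; the hypotheses $k_0\log p=o(n)$ and $\lambda_{\max}(\Omega_{0n})=o(n)$ guarantee invertibility and Gaussian concentration, while Cauchy interlacing bounds every submatrix condition number by $\lambda_{\max}(\Omega_{0n})/\lambda_{\min}(\Omega_{0n})$.

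On $\calE_n$ the Gaussian part of the posterior (the spread of $a_j^{(k)}$ about $\what{a}_j^{(k)}$ and the sampling error of $\what{a}_j^{(k)}$) is controlled deterministically, so the substantive difficulty is the bandwidth uncertainty. \emph{Overselection} is handled by the prior penalty: using the explicit weight $\pi_\alpha(k_j\mid\bfX_n)\propto\pi(k_j)(1+\alpha/\gamma)^{-k_j/2}(\what{d}_j^{(k_j)})^{-(\alpha n+\nu_0)/2}$, each superfluous predecessor shrinks $\what{d}_j^{(k_j)}$ by at most a factor $1+O(\log p/n)$ on $\calE_n$, which cannot defeat the $p^{-c_2}$ factor once $c_2>1$; hence $\pi_\alpha(k_j>k_{0j}\mid\bfX_n)$ is exponentially small and contributes nothing beyond the target rate. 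The crux, and the point at which the beta-min condition is genuinely absent, is \emph{underselection} $k<k_{0j}$, which cannot be excluded when small signals are present. Here I would turn the posterior weight into a bound on the omitted-signal strength: $\pi_\alpha(k_j=k\mid\bfX_n)$ is non-negligible only if the penalty $p^{c_2(k_{0j}-k)}$ is not overwhelmed by the residual-variance ratio $(\what{d}_j^{(k)}/\what{d}_j^{(k_{0j})})^{\alpha n/2}$, which on $\calE_n$ forces the omitted coefficient block to satisfy $a_{\mathrm{omit}}^T\sg_{0n}a_{\mathrm{omit}}\lesssim (k_{0j}-k)d_{0j}\log p/n$, and hence $\|a_j^{(k)}-a_{0j}\|_2^2\lesssim \{\lambda_{\max}(\Omega_{0n})/\lambda_{\min}(\Omega_{0n})\}(k_{0j}-k)\log p/n$. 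Thus either a bandwidth is under-selected and the induced bias is already within the stated rate, or it carries negligible posterior mass.

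To make the denominators in these ratios legitimate I would invoke the modified prior: with the proper $d_j\sim IG(\nu_0/2,\nu_0')$ the marginal likelihood of each regression is a bounded, properly normalized quantity, so the evidence at the true bandwidth admits the explicit lower bound of Lemma 7.1 in \cite{lee2019minimax}, for which $\lambda_{\max}(\Omega_{0n})/\lambda_{\min}(\Omega_{0n})=O(p)$ and $\lambda_{\max}(\Omega_{0n})=o(n)$ ensure that the $g$-prior centered at $\what{a}_j^{(k_{0j})}$ places adequate mass on a Kullback--Leibler neighborhood of $a_{0j}$. For each row the Gaussian posterior spread about $\what{a}_j^{(k)}$ together with the sampling error of $\what{a}_j^{(k)}$ is of order $\{\lambda_{\max}(\Omega_{0n})/\lambda_{\min}(\Omega_{0n})\}k/n$, while the underselection bias is of order $\{\lambda_{\max}(\Omega_{0n})/\lambda_{\min}(\Omega_{0n})\}(k_{0j}-k)\log p/n$, the latter dominating at the worst case $k_0\log p/n$. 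Summing over $j$ and using $\sum_{j=2}^p\log j\asymp(p-1)\log p$ yields the Frobenius statement; bounding $\|a_j^{(k)}-a_{0j}\|_1\le\sqrt{k_0}\,\|a_j^{(k)}-a_{0j}\|_2$ per row yields the $\ell_\infty$ statement; and $\|a_j^{(k)}-a_{0j}\|_\infty\le\|a_j^{(k)}-a_{0j}\|_2$ together with the union bound over all rows (which produces the global $\log p$) yields the element-wise maximum statement.

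The hardest step will be the third, the quantitative trade-off between underselection bias and posterior downweighting: without the beta-min condition I cannot import the selection consistency of Theorem \ref{thm_bandwidth_sel}, so I must carry the bias of every under-selected model through all three norms while controlling the accumulated condition-number factors, and the proper $IG$ prior is indispensable precisely because it supplies the evidence lower bound that makes the weight-versus-bias balance uniform over all $p-1$ regressions.
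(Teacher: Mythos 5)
Your architecture is genuinely different from the paper's proof, and parts of it are sound. The paper never separates under- from over-selection relative to $k_{0j}$: it introduces a single cutoff $k_j \le C_{\rm dim} k_0$, then bounds both $\pi_\alpha(k_j > C_{\rm dim}k_0 \mid \bfX_n)$ and $\pi_\alpha(\|a_j - a_{0j}\|_2 \ge \delta,\, k_j \le C_{\rm dim}k_0 \mid \bfX_n)$ by a numerator/denominator argument for fractional posteriors, importing Lemmas 6.5 and 7.1--7.3 of \cite{lee2019minimax}: the expected $\alpha$-power of the likelihood ratio over the bad set is exponentially small, and the denominator is handled by the evidence lower bound of Lemma 7.1 --- which is the one place the proper $IG(\nu_0/2,\nu_0')$ prior is actually needed. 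Your weight-versus-bias dichotomy for underselected bandwidths, run through the exact conjugate forms, is a legitimate and more self-contained alternative for that piece: models whose omitted projection energy exceeds $C(k_{0j}-k_j)d_{0j}\log p$ are killed by the factor $(\what{d}_j^{(k_j)}/\what{d}_j^{(k_{0j})})^{-(\alpha n + \nu_0)/2}$ even against the prior's preference $p^{c_2(k_{0j}-k_j)}$ for small models (and the constant there is at your disposal through $K_{\rm chol}$), while models below that energy have bias within the target rate once the Gram eigenvalue bounds convert projection energy into $\ell_2$ distance. Note, however, that this route only ever uses the ratio bound $\pi_\alpha(k_j = k \mid \bfX_n) \le \pi_\alpha(k_j = k\mid\bfX_n)/\pi_\alpha(k_j = k_{0j}\mid\bfX_n)$ and the explicit conditional Gaussian/inverse-gamma laws, neither of which requires an evidence lower bound; so your appeal to Lemma 7.1 "to make the denominators legitimate" is misplaced --- as sketched, your proof never actually uses the modified prior, and you should either locate where it enters or recognize you are implicitly claiming the theorem for the unmodified prior as well.

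The genuine gap is the overselection step, and it comes from combining your "deterministic on $\calE_n$" framing with the union over $p$ rows that all three norm statements require. For nested models the increment $n(\what{d}_j^{(k_{0j})} - \what{d}_j^{(k)})/d_{0j}$ is exactly $\chi^2_{k-k_{0j}}$ distributed, driven by the noise vectors $\tilde{\epsilon}_j$, which are independent across rows; hence any event of probability tending to one must tolerate some row whose one-step increment is of size $2(1+o(1))\log p$, and the best uniform per-row guarantee your $\calE_n$ can deliver is $\pi_\alpha(k_j > k_{0j}\mid\bfX_n) \lesssim \exp\{\alpha \cdot 2\log p/2\}\, p^{-c_2} = p^{\alpha - c_2 + o(1)}$. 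Summing over the $p$ rows gives $p^{1+\alpha-c_2+o(1)}$, which diverges whenever $c_2 < 1+\alpha$ --- a regime (A4) explicitly allows, e.g.\ $c_2 = 1.1$ with $\alpha = 0.99$. So the claim that the chance shrinkage "cannot defeat the $p^{-c_2}$ factor once $c_2 > 1$" is false uniformly over rows; on an event basis it would need $c_2 > 1 + \alpha$. The conclusion itself is true, but it must be proved in expectation rather than on a concentration event: since $\what{d}_j^{(k)}/\what{d}_j^{(k_{0j})} \sim Beta((n-k)/2, (k-k_{0j})/2)$ with no signal assumption, one has $\bbE_0\{(\what{d}_j^{(k)}/\what{d}_j^{(k_{0j})})^{-(\alpha n + \nu_0)/2}\} \le \{2/(1-\alpha)\}^{(k-k_{0j})/2}$, whence $\sum_{j}\sum_{k > k_{0j}} \bbE_0\{\pi_\alpha(k_j = k\mid\bfX_n)\} = O(p^{1-c_2}) = o(1)$; this is exactly the (beta-min-free) overselection half of the paper's proof of Theorem \ref{thm_bandwidth_sel}, and the averaging over data is what rescues the sum over rows, because the $p^{\alpha}$-size fluctuation occurs in only a vanishing fraction of rows rather than in all of them simultaneously. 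Alternatively, keep the event-based route but exclude only $k_j \ge k_{0j} + m_0$ for a sufficiently large constant $m_0$ (or $k_j > C_{\rm dim}k_0$, as the paper does), where the penalty $c_2(k_j - k_{0j})\log p$ does dominate the worst-row fluctuation, and absorb mildly overselected bandwidths --- which carry no bias and dimension $O(k_0)$ --- into your Gaussian-spread analysis. Without one of these repairs the proof does not go through under the stated hyperparameter conditions.
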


	\bch 
	If we assume bounded eigenvalue conditions on $\Omega_{0n}$, the above posterior convergence rates are slightly slower than those in Theorem \ref{thm_post_conv} due to the absence of the beta-min condition.	
	Suppose $\epsilon_0\le \lambda_{\min}(\Omega_{0n})\le \lambda_{\max}(\Omega_{0n})\le \epsilon_0^{-1}$ for some small constant $0<\epsilon_0 <1/2$.
	Under these bounded eigenvalue conditions, \cite{yu2017learning} (Lemma 17) obtained the convergence rates $\zeta_\Gamma \sqrt{\log p /n}$, $\zeta_\Gamma (k_0 +1) (\log p /n)^{1/2}$ and $\zeta_\Gamma \sqrt{ ( \sum_{j=2}^p k_{0j} + p ) \log p /n}$ under the element-wise maximum norm, matrix $\ell_\infty$-norm and Frobenius norm, respectively, where $\zeta_\Gamma = 8 (2 \|D_{0n}\|_\infty )^{1/2} \rho^{-1} (  4 \max_j  \| \sg_{0n, k_{0j}}^{-1}\|_\infty   + 5 \epsilon_0^{-1} )$ for some constant $\rho \in (0,1]$.
	Note that, as mentioned before, it holds that $ \| \sg_{0n, k_{0j}}^{-1}\|_\infty  = O(k_{0j}^{1/2})$ without further assumption.
	Thus, their convergence rates are slower than or comparable to ours.
	We would also like to mention that, under bounded eigenvalue conditions on $\Omega_{0n}$, the posterior convergence rate under the matrix $\ell_\infty$-norm in Theorem \ref{thm_post_conv_nobetamin} is minimax if $\log p \asymp \log(p / k_0)$ by Theorem 3.5 in \cite{lee2019minimax}.\ech
	For example, it is the case if $k_0 = O(p^\beta)$ for some $0<\beta<1$.

	\begin{remark}
		By carefully modifying the proof of Theorem 3.6 in \cite{lee2019minimax}, we can probably obtain posterior convergence rates for precision matrices.
		However, it is not clear whether the obtained posterior convergence rates are minimax optimal.
		Although \cite{liu2020minimax} showed minimax rates for precision matrices with bandable Cholesky factors, they considered slightly different parameter spaces for Cholesky factors from what we consider.
		To the best of our knowledge, minimax rates for precision matrices based on Cholesky factors with varying bandwidths have not been established.
	\end{remark}

	\section{Numerical Studies}\label{sec_simul}

	\subsection{Choice of hyperparameters}\label{subsec:cv}

	The proposed Bayesian method has hyperparameters that need to be determined, and we provide the following guidelines.
	It is reasonable to use the hyperparameter $\alpha$ close to 1 unless there is a strong evidence of model misspecification.
	Thus, in our numerical studies, $\alpha =0.99$ is used.
	We use the hyperparameter $\nu_0=0$, which leads to the Jeffreys' prior \citep{jeffreys1946invariant} for $d_j$.
	The upper bound for model sizes, $R_j$, is set at $R_j = \lfloor n /2 \rfloor - 2$.
	The rest of hyperparameters $\gamma, c_1$ and $c_2$ control the penalty for large models through $\pi_\alpha(k_j \mid \bfX_n)$: as the values of $\gamma, c_1$ and $c_2$ increase,  $\pi_\alpha(k_j \mid \bfX_n)$ prefers smaller values of $k_j$.
	We suggest to determine $c_2$ using the Bayesian cross-validation method \citep{gelman2014understanding}, with the other hyperparameters $\gamma =0.1$ and $c_1=1$ fixed.
	Compared with the Bayesian cross-validation method for choosing $(\gamma, c_1,c_2)$, this approach significantly reduces computational time, while achieving nice performance in our simulation study.

	To conduct Bayesian cross-validation, we repeatedly split the data $n_{\rm cv}$ times into a training set and a test set with size $n_1 = \lceil n/2 \rceil$ and $n_2 = \lfloor n/2 \rfloor$, respectively.
	Let $I_1(\nu)$ and $I_2(\nu)$ be indices for the $\nu$th training set and test set, respectively, i.e., $|I_1(\nu)| = n_1$, $|I_2(\nu)|=n_2$ and $I_1(\nu) \cup I_2(\nu)= \{1,\ldots,n \}$, for any $\nu = 1,\ldots, n_{\rm cv}$.
	Denote $\bfX_{I_1(\nu)}$ and $\bfX_{I_2(\nu)}$ as $\{X_i\}_{i \in I_1(\nu)}$ and $\{X_i\}_{i \in I_2(\nu)}$, respectively.
	Then for a given hyperparameter $c_2$, the estimated out-of-sample log predictive density, $\text{lpd}_{\rm cv, c_2}$, is 	
	\bea
	\text{lpd}_{\rm cv, c_2} &=& \sum_{\nu =1}^{n_{\rm cv}} \log f_{c_2} ( \bfX_{I_2(\nu)} \mid  \bfX_{I_1(\nu)}  ) \\
	&=&  \sum_{\nu =1}^{n_{\rm cv}} \log   \Big\{ \sum_{k}  f (\bfX_{I_2(\nu)} \mid k) \pi_{\alpha,c_2} (k \mid \bfX_{I_1(\nu)}  )  \Big\}   ,
	\eea
	where $k = (k_2,\ldots, k_p)$, $ f (\bfX_{I_2(\nu)} \mid k) $ is the marginal likelihood for $k$ given $\bfX_{I_2(\nu)}$, and $\pi_{\alpha,c_2} (k \mid \bfX_{I_1(\nu)}  ) $ is the fractional posterior based on $\bfX_{I_1(\nu)}$ and the hyperparameter $c_2$.
	The aim of the Bayesian cross-validation is to find the optimal $c_2$ maximizing $\text{lpd}_{\rm cv, c_2}$.

	The marginal likelihood $ f (\bfX_{I_2(\nu)} \mid k) $ is available in a closed form:
	\bea
	f (\bfX_{I_2(\nu)} \mid k) \,=\, \prod_{j=2}^p  \Bigg[ (2\pi )^{-n_2/2} \Gamma \Big( \frac{n_2 + \nu_0}{2} \Big) \Big( 1+ \frac{1}{\gamma} \Big)^{-k_j /2} \Big\{  \what{d}_j^{(k_j)}(I_2(\nu)) /2 \Big\}^{ - (n_2 + \nu_0)/2 }  \Bigg]  ,
	\eea
	where $\what{d}_j^{(k_j)}(I_2(\nu))$ is the estimated variance $\what{d}_j^{(k_j)}$ using $\bfX_{I_2(\nu)}$.
	The closed form of the marginal posterior $\pi_{\alpha,c_2} (k_j \mid \bfX_{I_1(\nu)}  ) $ is also available in \eqref{posteriors}, which requires the calculation of $\what{d}_j^{(k_j)}(I_1(\nu))$. 
	When calculating $\text{lpd}_{\rm cv, c_2}$, the main computational burden comes from calculating $\what{d}_j^{(k_j)}(I_1(\nu))$ and $\what{d}_j^{(k_j)}(I_2(\nu))$ for each $k_j = 0,1,\ldots, R_j \wedge (j-1)$ and $j=2,\ldots, p$.
	Note that these quantities do not vary from  different choices of $c_2$. 
	For a given randomly split data, these quantities only need to be calculated once regardless of the value of $c_2$.
	Therefore, LANCE prior enables scalable cross-validation-based inference even in high-dimensions.
	Throughout the numerical study, we split the data $n_{\rm cv} = 5$ times.

	\subsection{Simulated data}\label{subsec:simulated}
	
	\bch Throughout the numerical studies in this section, we focus on the bandwidth selection performance, while additional simulation studies focusing on the estimation performance are given in the Appendix.\ech
	We generated the true precision matrix $\Omega_{0n} = (I_p - A_{0n})^T D_{0n}^{-1} (I_p-A_{0n})$ for simulation studies.
	The diagonal entries of $D_{0n} = diag( d_{0j} )$ were drawn independently from $Unif(2,5)$.
	Next, the lower triangular entries of $A_{0n} = ( a_{0, jl} )$ were generated as follows:
	\begin{itemize}
		\item Model 1: 
		\bch For $2\le v \le p$, the bandwidth of the $v$th row of $A_{0n}$ is sampled from $Unif\{ 1, \ldots, \min( v-1, 5) \}$.
		This produces a sparse Cholesky factor with the maximum bandwidth size 5.\ech
		Each nonzero element in $A_{0n}$ is drawn independently from $Unif(A_{0,\min},  A_{0, \max} )$, where the positive of negative sign is assigned with probability 0.5.
		
		\item Model 2: $A_{0n}$ is a block diagonal matrix consisting of 5 blocks with size $p/5$, while the maximum size of bandwidths is 40.
		This setting produces a moderately sparse Cholesky factor.
		The length of the bandwidth of the $v$th row in each block follows a mixture distribution, $ 0.5 \times Unif\{1, \ldots, \min (v-1 , 40 )\} + 0.5 \times \delta_0 $, where $\delta_0$ is a point mass at zero.
		Each nonzero element in $A_{0n}$ is drawn independently from $Unif(A_{0,\min},  A_{0, \max} )$, where the positive or negative sign is assigned with probability 0.5.
		This setting corresponds to Model 2 in \cite{yu2017learning}.
		
		\item Model 3: $A_{0n}$ is a block diagonal matrix consisting of 2 blocks with size $p/2$, and the rest  of the generation process is similar to  that of Model 2.
		This setting produces a denser Cholesky factor compared with Model 2.
		This setting corresponds to Model 3 in \cite{yu2017learning}.
	\end{itemize}		
	Figure \ref{fig:A0s} shows a simulated true Cholesky factors for Settings 1 and 2 with $(A_{0,\min}, A_{0,\max}) =(0.1,  0.4)$ and $p=100$.
	\begin{figure*}[!tb]
		\centering
		\includegraphics[width=13.6cm,height=4.5cm]{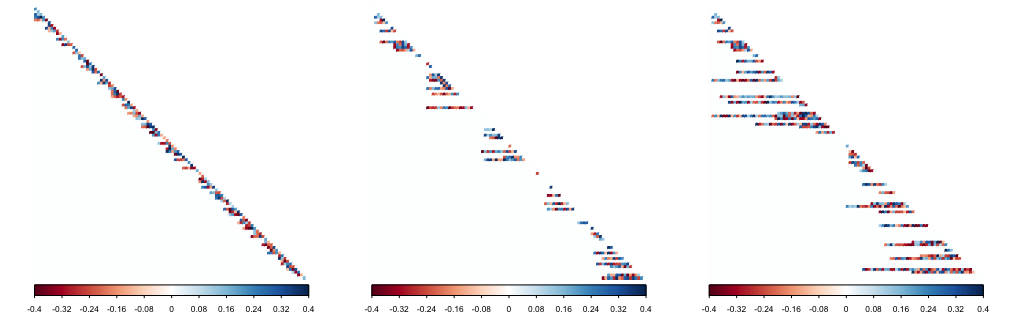}
		\caption{
			The true Cholesky factors for Model 1 (left), Model 2 (middle) and Model 3 (right) with $(A_{0,\min}, A_{0,\max}) =(0.1,  0.4)$ and $p=100$.
		}
		\label{fig:A0s}
	\end{figure*}

	We compare the performance of the proposed LANCE prior with the penalized likelihood approach in \cite{yu2017learning}, which we call YB method.
	Since the unweighted version outperformed the weighted version in the simulation studies in \cite{yu2017learning}, we used the unweighted version of the penalized likelihood approach.
	Furthermore, the convex sparse Cholesky selection (CSCS) \citep{khare2019scalable}, which is designed for sparse Cholesky factors, is also considered as a contender.
	Let $\what{A}_n^{YB}$ and $\what{A}_n^{CSCS}$ be the estimated Cholesky factor based on the penalized likelihood approach proposed by \cite{yu2017learning} and \cite{khare2019scalable}, respectively.
	To estimate the local dependence structure, we set all of the estimated entries in $\what{A}_n^{YB}$ whose absolute values are below $0.1^{10}$ to zero, as suggested by \cite{yu2017learning}.
	The receiver operating characteristic (ROC) curves for LANCE prior and the penalized likelihood approaches were drawn based on 100 hyperparameters $c_2$ selected from $[-1.5, 5]$ and 100 tuning parameters $\lambda$ selected from $[0.01, 4]$, respectively.
	Using these hyperparameter values, we also compare the performance of cross-validation for LANCE prior and YB method.
	The Bayesian cross-validation described in Section \ref{subsec:cv} was used for LANCE prior.
	For YB method, we used \verb|varband_cv| function in \verb|R| package \verb|varband|. 
	The $5$-fold cross-validation was used based on the unweighted version of the penalty.
	We did not conduct a cross-validation for  CSCS method due to heavy computation.
	\bch The three methods, the LANCE, YB and CSCS methods, can be run in parallel, although we did not use parallel computing in the numerical study.\ech

	\begin{figure*}[!tb]
		\centering
		\includegraphics[width=5.5cm,height=4.cm]{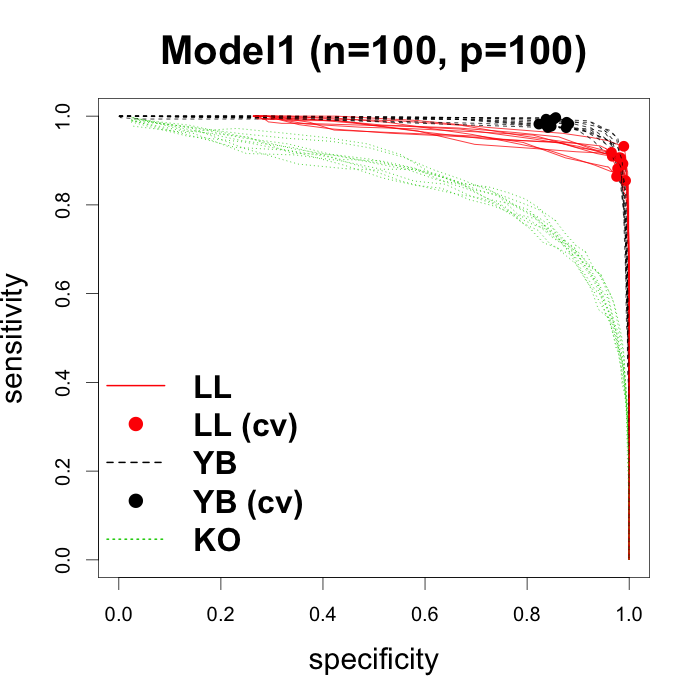}
		\includegraphics[width=5.5cm,height=4.cm]{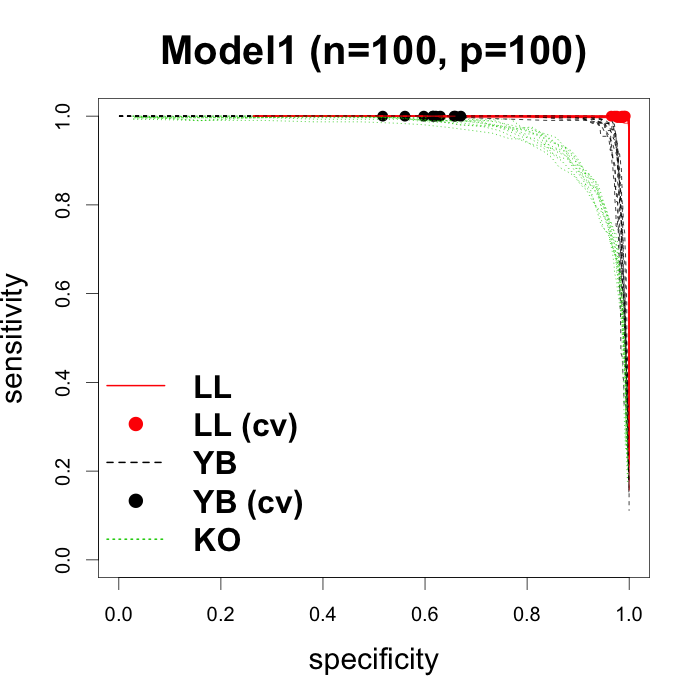}
		\includegraphics[width=5.5cm,height=4.cm]{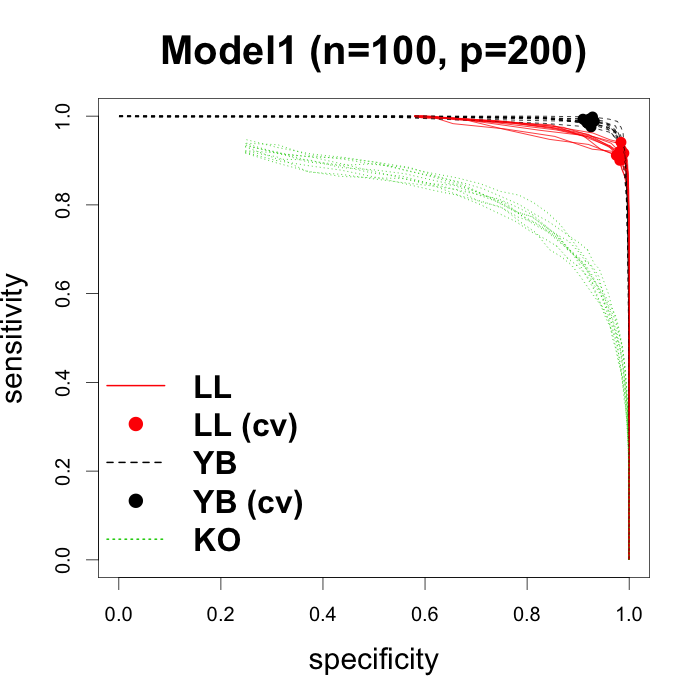}
		\includegraphics[width=5.5cm,height=4.cm]{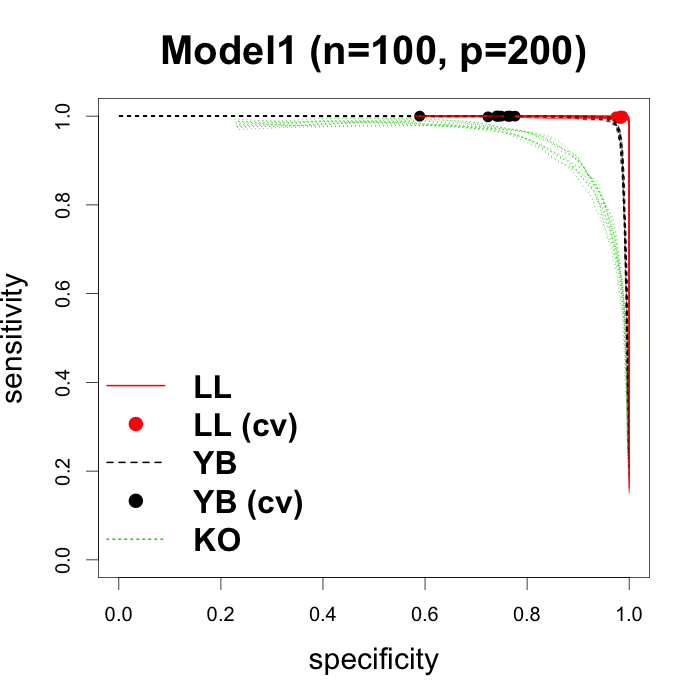}
		\includegraphics[width=5.5cm,height=4.cm]{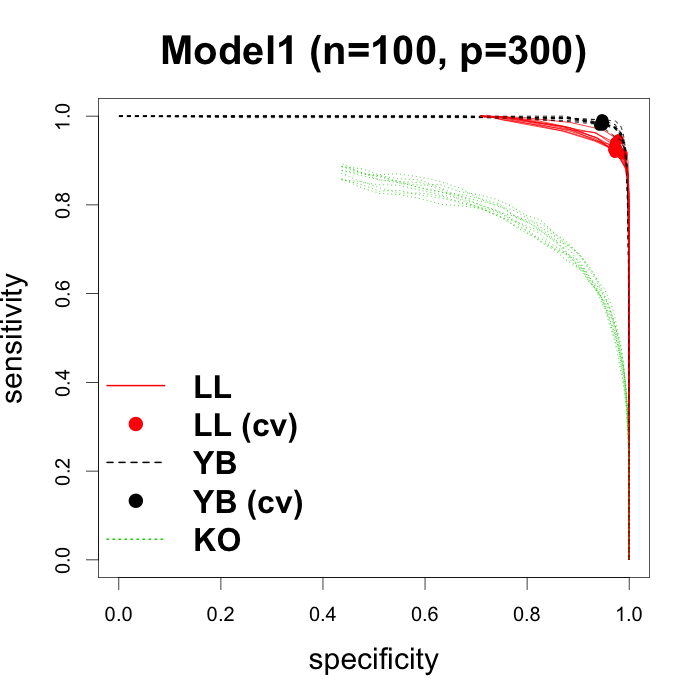}
		\includegraphics[width=5.5cm,height=4.cm]{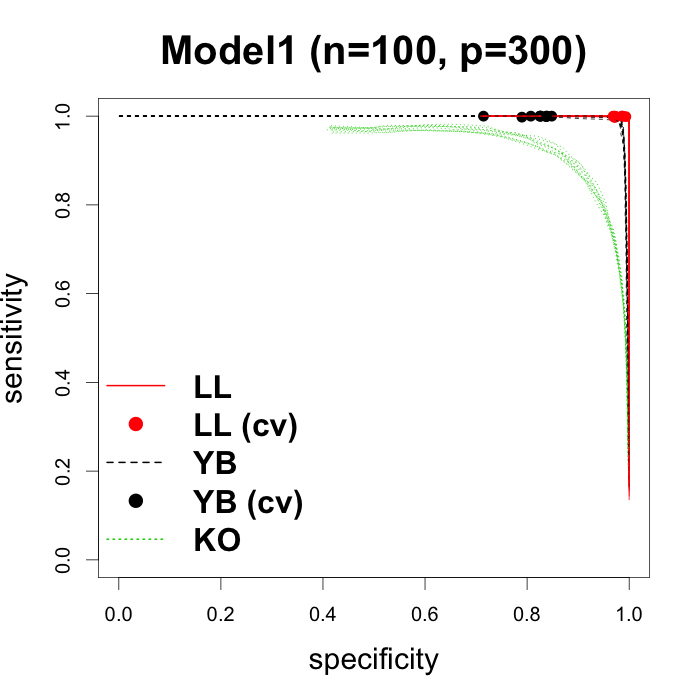}
		\vspace{-.2cm}
		\caption{
			ROC curves are represented based on 10 simulated data sets from Model 1 with $n=100$ and $p \in \{100, 200, 300\}$.  
			Left column and right column show the results for $(A_{0,\min}, A_{0,\max}) =(0.1,  0.4)$ and $(A_{0,\min}, A_{0,\max}) =(0.4,  0.6)$, respectively.
			LL and YB represent the methods proposed in this paper and \cite{yu2017learning}, respectively.
		}
		\label{fig:roc1}
	\end{figure*}	
	\begin{figure*}[!tb]
		\centering
		\includegraphics[width=5.5cm,height=4.cm]{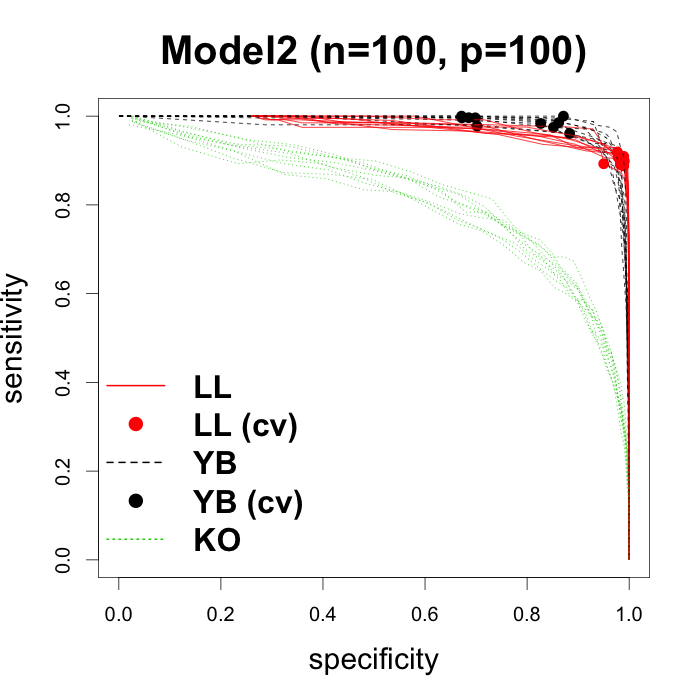}
		\includegraphics[width=5.5cm,height=4.cm]{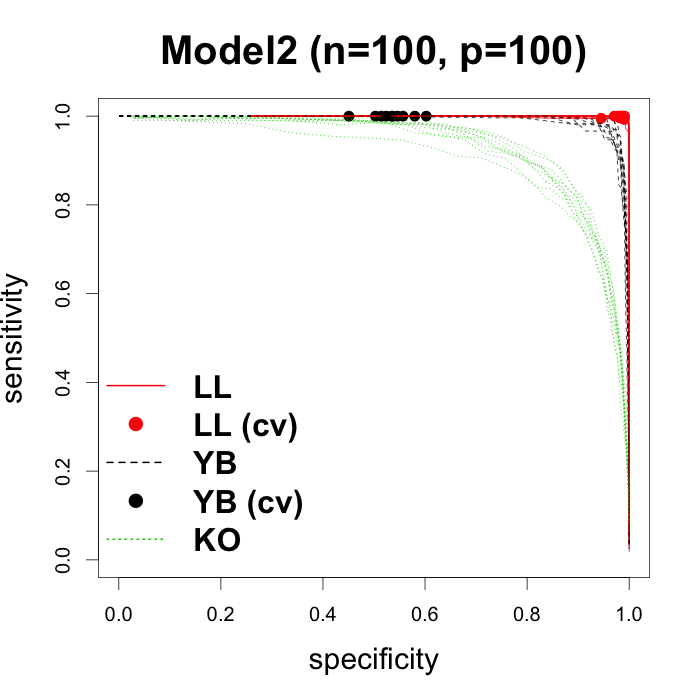}
		\includegraphics[width=5.5cm,height=4.cm]{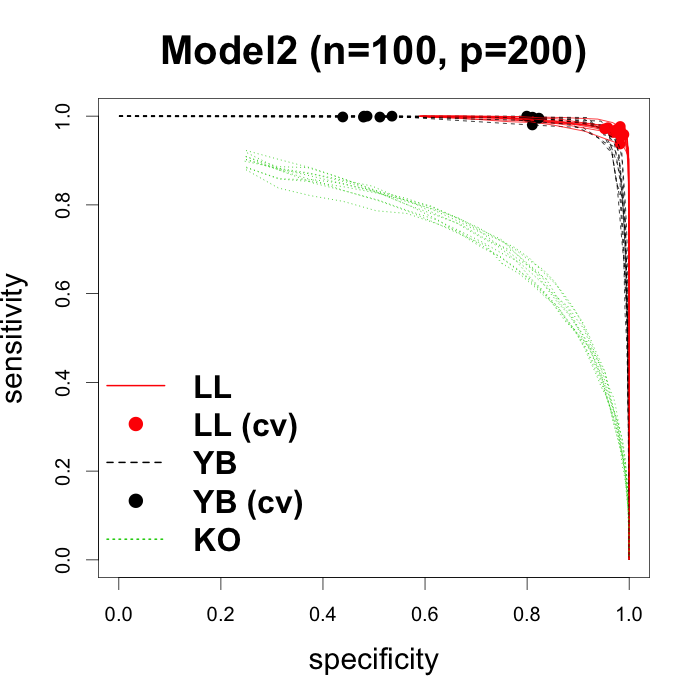}
		\includegraphics[width=5.5cm,height=4.cm]{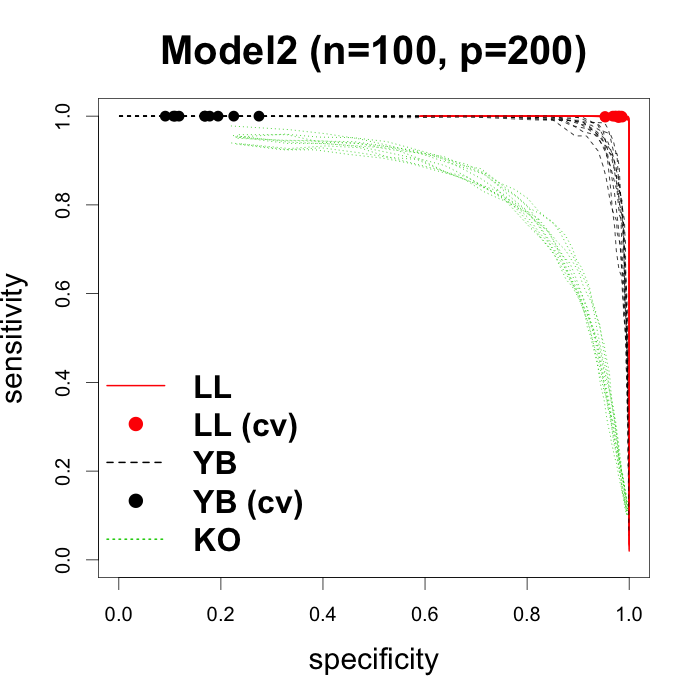}
		\includegraphics[width=5.5cm,height=4.cm]{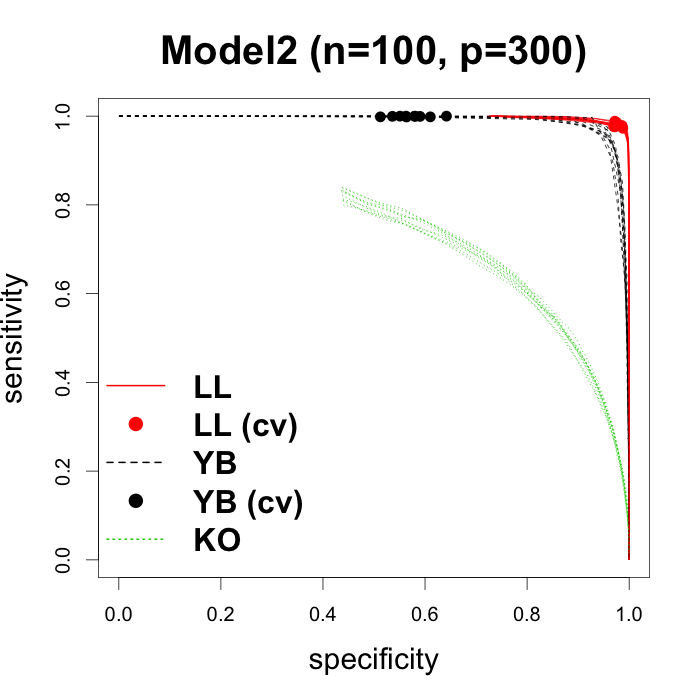}		
		\includegraphics[width=5.5cm,height=4.cm]{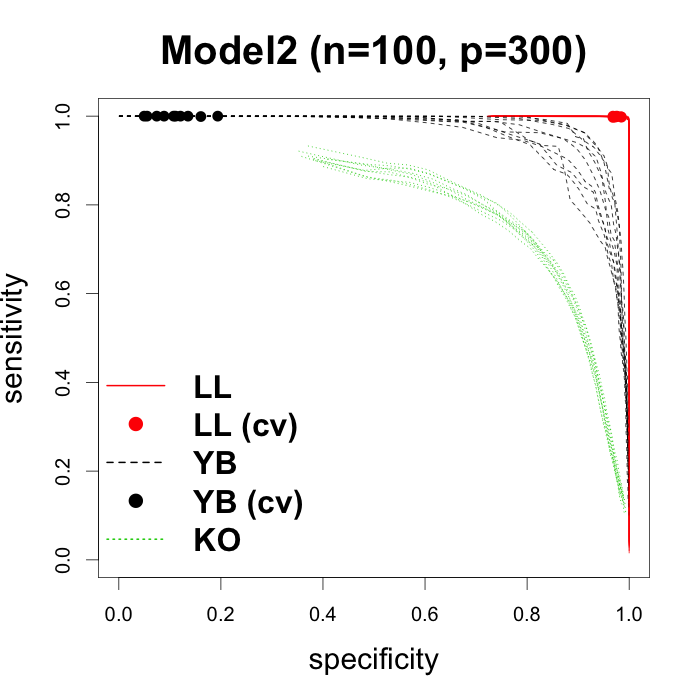}
		\vspace{-.2cm}
		\caption{
			ROC curves are represented based on 10 simulated data sets from Model 2 with $n=100$ and $p \in \{100, 200, 300\}$.  
			Left column and right column show the results for $(A_{0,\min}, A_{0,\max}) =(0.1,  0.4)$ and $(A_{0,\min}, A_{0,\max}) =(0.4,  0.6)$, respectively.
		}
		\label{fig:roc2}
	\end{figure*}
	\begin{figure*}[!tb]
		\centering
		\includegraphics[width=5.5cm,height=4.cm]{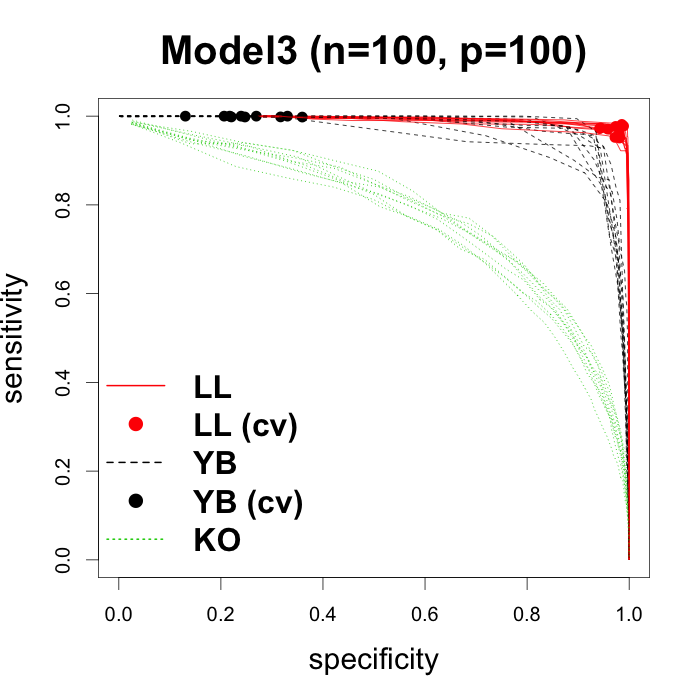}
		\includegraphics[width=5.5cm,height=4.cm]{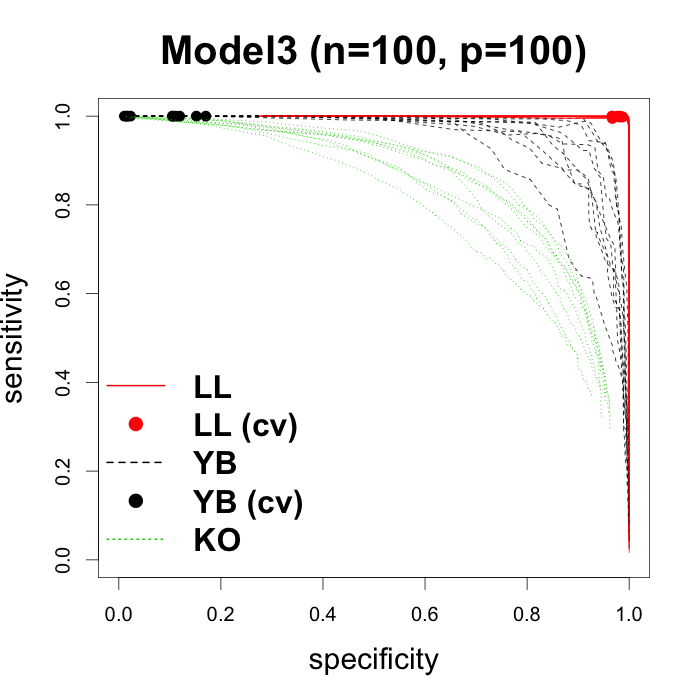}
		\includegraphics[width=5.5cm,height=4.cm]{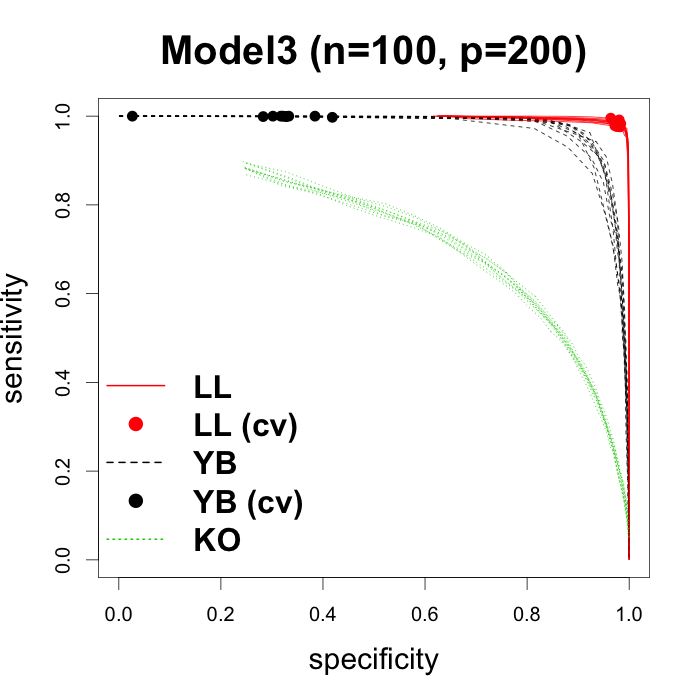}
		\includegraphics[width=5.5cm,height=4.cm]{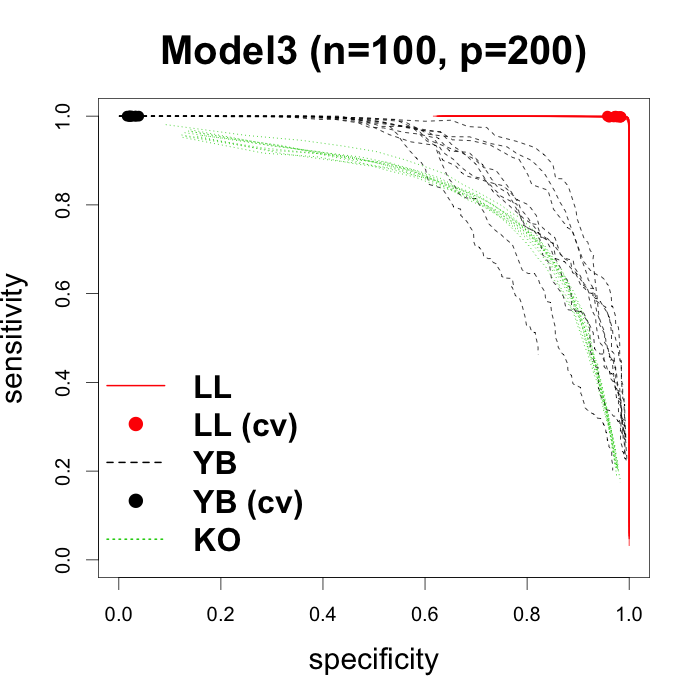}
		\includegraphics[width=5.5cm,height=4.cm]{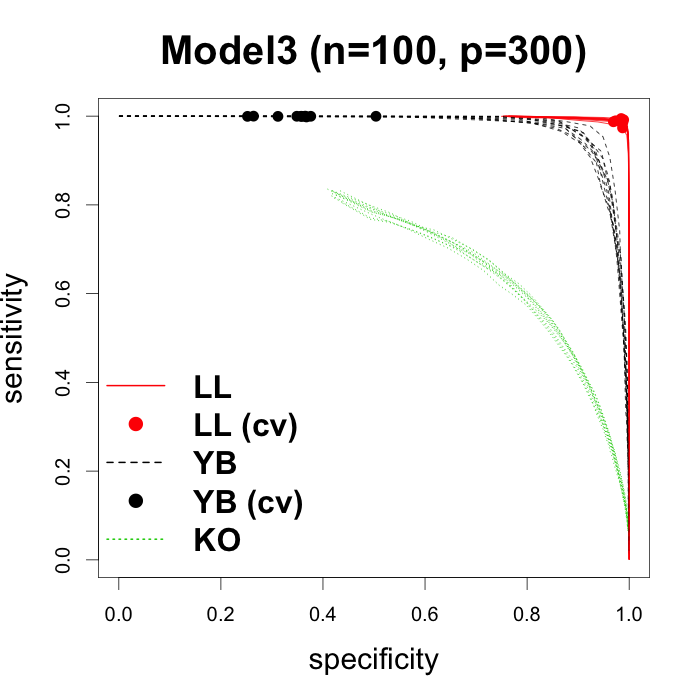}
		\includegraphics[width=5.5cm,height=4.cm]{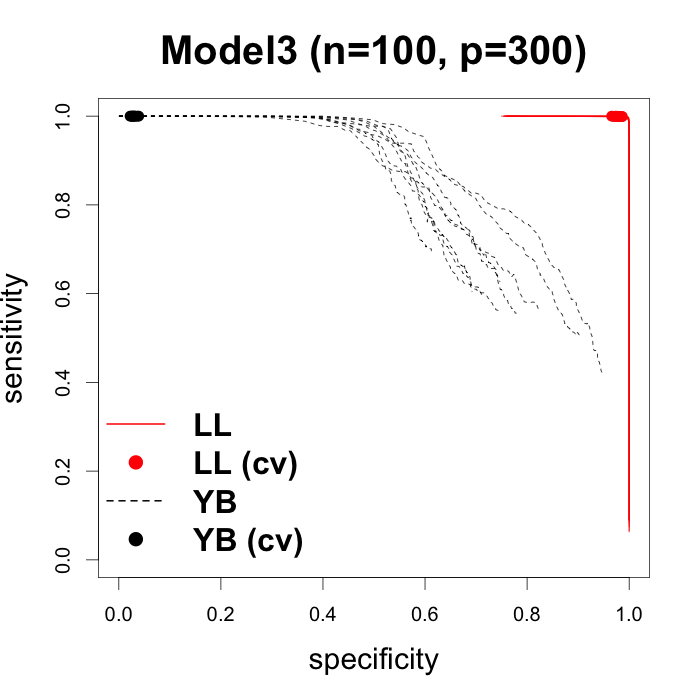}
		\vspace{-.2cm}
		\caption{
			ROC curves are represented based on 10 simulated data sets from Model 3 with $n=100$ and $p \in \{100, 200, 300\}$.  
			Left column and right column show the results for $(A_{0,\min}, A_{0,\max}) =(0.1,  0.4)$ and $(A_{0,\min}, A_{0,\max}) =(0.4,  0.6)$, respectively.
		}
		\label{fig:roc3}
	\end{figure*}
	
	
	Figures \ref{fig:roc1}, \ref{fig:roc2} and \ref{fig:roc3} represent ROC curves based on 10 simulated data sets for Model 1, Model 2 and Model 3, respectively.
	For Model 3 with $p=300$, we omit the results for CSCS method, because it did not converge for several days and caused a convergence problem.
	\bch As expected, CSCS method does not work well compared with other two methods tailored to local dependence structure.
	The main reason for this phenomenon is that CSCS method does not guarantee local dependence structure.
	Based on the simulation results, the performances of LANCE prior and YB method are comparable in Model 1 (i.e., when bandwidths are smaller than $5$), but LANCE prior tends to give larger area under the curves than those of YB method in Model 2 (i.e., when bandwidths are moderately large).
	Especially in Model 3 (i.e., when bandwidths are large), LANCE prior significantly outperforms YB method especially for large $p$.
	Thus, it seems that YB method tends to work better with smaller bandwidths, which is consistent with the observations in \cite{yu2017learning}.\ech
	Furthermore, we found that LANCE prior works better under large signals, $(A_{0,\min}, A_{0,\max}) =(0.4,  0.6)$, compared with small signals, $(A_{0,\min}, A_{0,\max}) =(0.1,  0.4)$.
	This makes sense because it is expected that large signals will easily satisfy the beta-min condition (A2).

	The dots in Figures \ref{fig:roc1}, \ref{fig:roc2} and \ref{fig:roc3} show the results based on cross-validation, where red dots and black dots represent those of LANCE prior and YB method, respectively.
	We found that cross-validation based on LANCE prior gives nearly optimal result in the sense that the result of the cross-validation method is located close to $(1,1)$ on the ROC curve.
	On the other hand, cross-validation based on YB method tends to produce high false positive, which results in low specificity.
	In many cases, even when ROC curves of the two methods are similar, the performances of our cross-validation-based inference are much better than those of YB method.

	\begin{figure*}[!tb]
		\centering
		\includegraphics[width=5.5cm,height=4.cm]{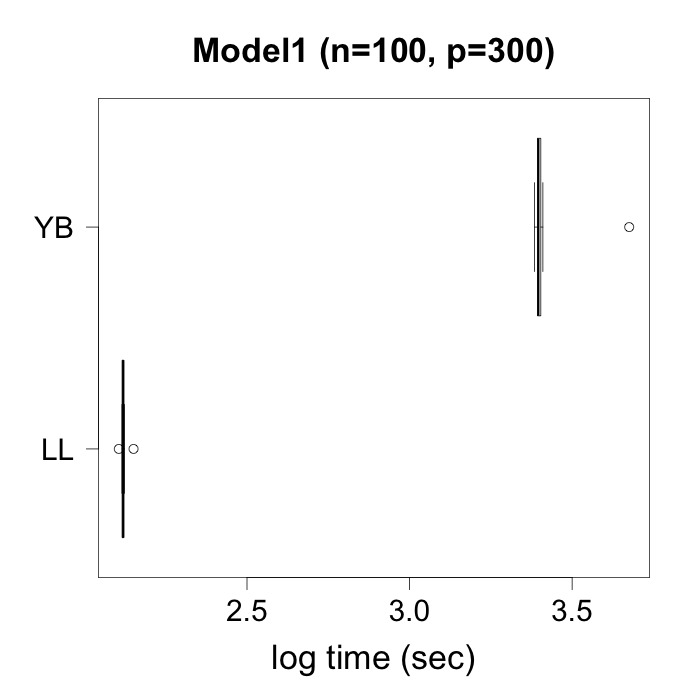}
		\includegraphics[width=5.5cm,height=4.cm]{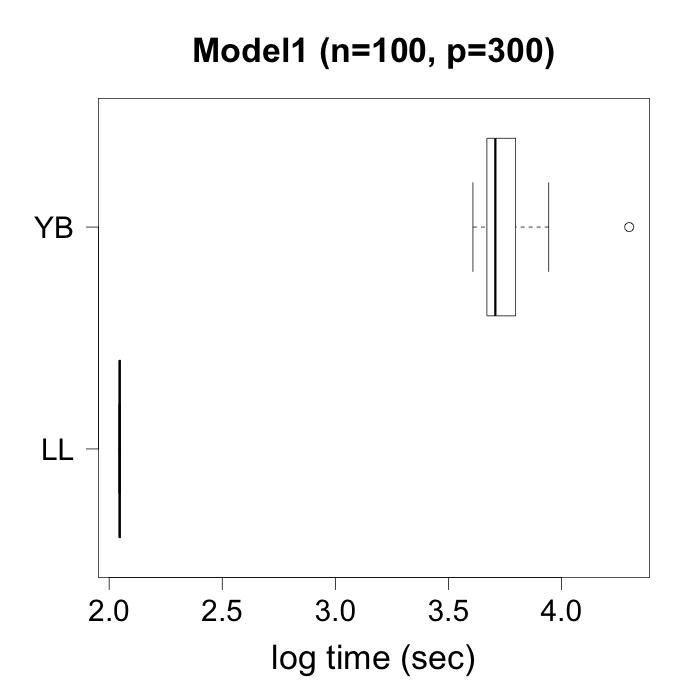}
		\includegraphics[width=5.5cm,height=4.cm]{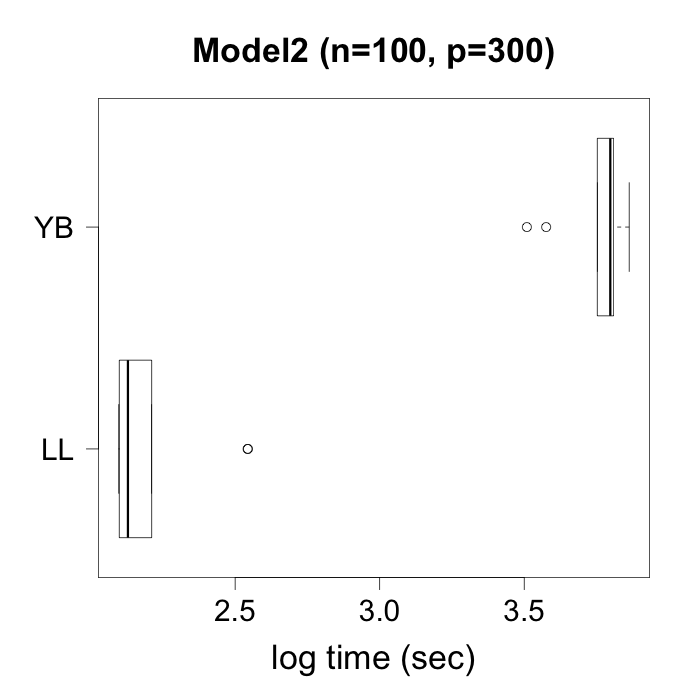}
		\includegraphics[width=5.5cm,height=4.cm]{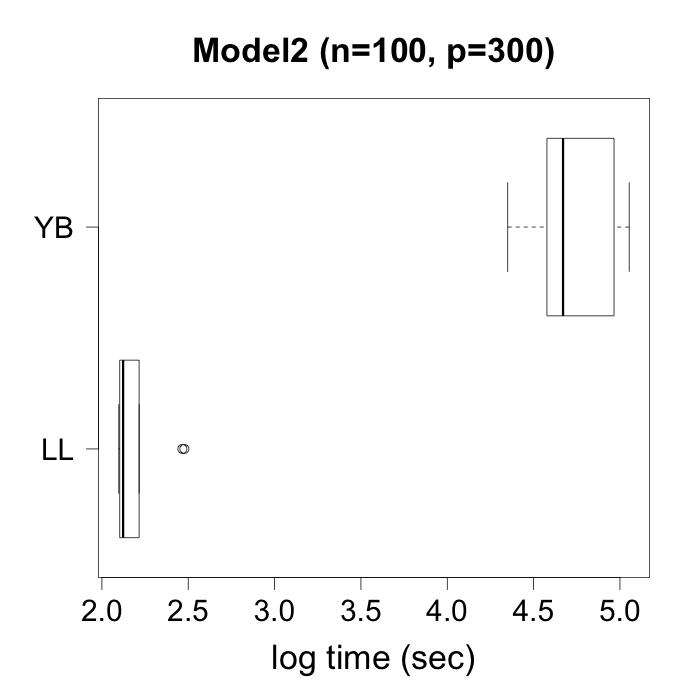}
		\includegraphics[width=5.5cm,height=4.cm]{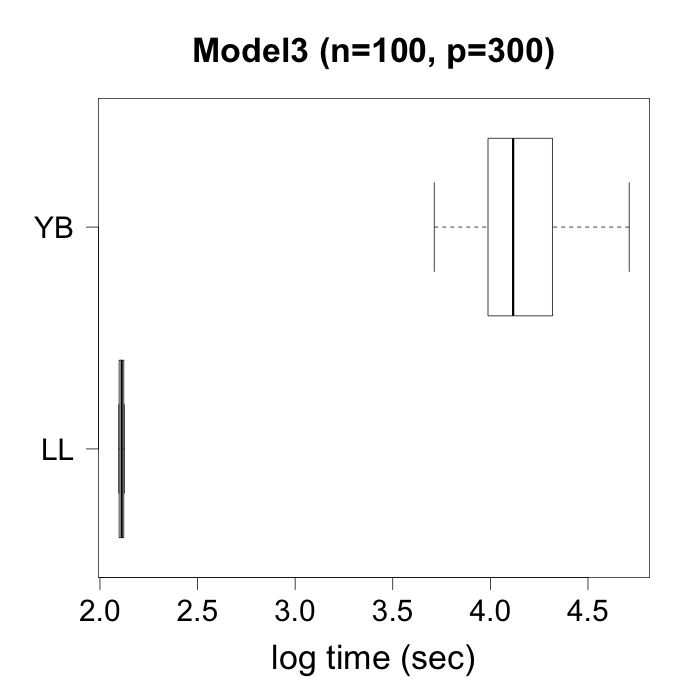}
		\includegraphics[width=5.5cm,height=4.cm]{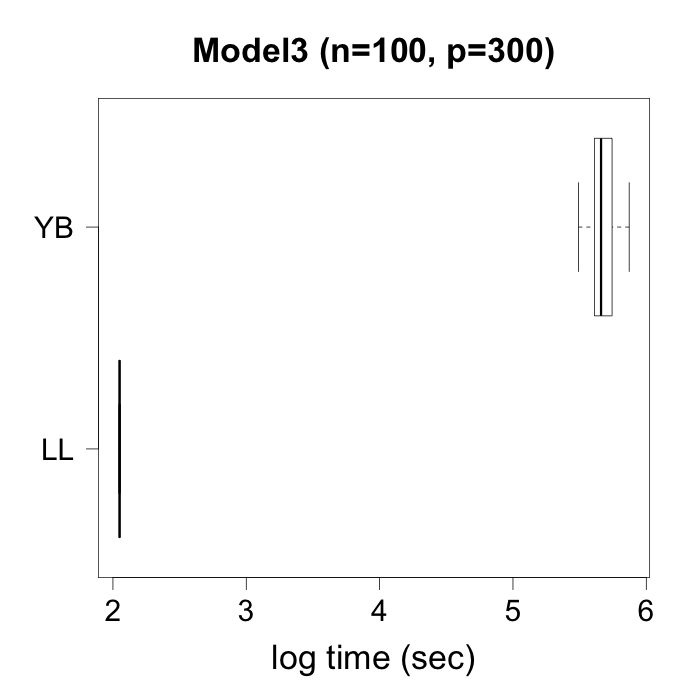}
		\vspace{-.2cm}
		\caption{
			Logarithm of computation times for each method based on 10 simulated data sets.
			Left column and right column show the results for $(A_{0,\min}, A_{0,\max}) =(0.1,  0.4)$ and $(A_{0,\min}, A_{0,\max}) =(0.4,  0.6)$, respectively.
		}
		\label{fig:time}
	\end{figure*}
	We also found that the proposed method is much faster than YB method in most settings.
	Figure \ref{fig:time} shows box plots of the computation times for cross-validation using 100 hyperparameters.
	For each method, box plots were drawn based on 10 simulated data sets with $n=100$ and $p=300$.
	The relative computational gain of  LANCE prior can be summarized by dividing the computation time for the LANCE prior by that for  YB method.
	In our simulation settings, the mean and median of the relative computational gain of  LANCE prior were 1539 and 208, respectively, which clearly show a computational advantage of the proposed method.
	Also note that YB method was conducted using the \verb|R| package \verb|varband| providing \verb|C++| implementations, while  LANCE prior was implemented using only \verb|R|.
	The main reason for this observation is that YB method requires solving a penalized likelihood problem for each value of the tuning parameter $\lambda$.
	On the other hand, for  LANCE prior, once the estimated error variance, $\what{d}_j^{(k_j)}$, is calculated, there is no need to recalculate it for various values of $c_2$.
	As a result, the cross-validation for LANCE prior is much faster than the state-of-the-art contender, thus enables scalable inference even in high-dimensions.

	\subsection{Real data analysis: phone call center and gun point data}\label{subsec:real}
	
	We  demonstrate the practical performance of  LANCE prior by applying our model to two real data examples.  We first consider the  telephone call center data set, which was analyzed by \cite{huang2006covariance} and \cite{bickel2008regularized}.
	This data set consists of phone calls for 239 days in 2002 from a call center of a major U.S. financial organization. 
	The phone calls were recorded from 7:00 am until midnight for every 10 minutes, resulting in 102 intervals for each day, except holidays, weekends and days when the recording system did not work properly.
	The number of calls on the $j$th time interval of the $i$th day is denoted by $N_{ij}$ for $i=1,\ldots, 239$ and $j=1,\ldots, 102$ $(p=102)$.
	The first $205$ days are used as a training set $(n=205)$, and the last $34$ days are used a test set.
	The data were transformed to $X_{ij} = \sqrt{N_{ij} + 1/4 }$ as in \cite{huang2006covariance} and \cite{bickel2008regularized}.
	The data were centered after the transformation.
	Note that the data has a natural time ordering between the variables making it appropriate to apply  LANCE prior.

	The primary purpose is to predict the number of phone calls during a certain time period.
	We divide 102 time intervals for each day into two groups: those before the 51st interval and those after 52nd interval.
	For each $j=52,\ldots, 102$, we predict $X_{ij}$ using the best linear predictor based on $X^j := (X_{i1},\ldots, X_{i, j-1})^T$, 
	\bea
	\hat{X}_{ij}  &=&  \mu_j  + \Sigma_{(j, 1:(j-1) )} \{ \Sigma_{(1:(j-1), 1:(j-1) )}  \}^{-1}  ( X^j - \mu^j ) ,
	\eea
	where $\mu_j = \bbE(X_{1j})$, $\mu^j = (\mu_1,\ldots, \mu_{j-1})^T$ and $\Sigma_{S_1, S_2} = ( \sigma_{ij}  )_{i \in S_1, j \in S_2 }$.
	The unknown parameters are estimated by $\hat{\mu}_j = \sum_{i=1}^{205} X_{ij}/205$ and $\hat{\Sigma} = \hat{\Omega}^{-1}$, where $\hat{\Omega}$ are estimated using various methods including  LANCE prior: LANCE prior, YB method \citep{yu2017learning}, ESC prior \citep{lee2019minimax} and CSCS method \citep{khare2019scalable}.
	\bch For LANCE prior, $\what{a}_j^{(\hat{k}_j) }$ and $\what{d}_j^{(\hat{k}_j) }$ are used to construct $\hat{\Omega}$, where $\hat{k}_j$ is the posterior mode.
	For ESC prior, instead of varying bandwidths, the posterior sample-based mode of the support of the Cholesky factor is used.\ech

	The absolute prediction error is calculated by $PE_j = \sum_{i=206}^{239} |X_{ij} - \hat{X}_{ij} |/34$ for each $j=52,\ldots, 102$, and the average of prediction errors, $\sum_{j=52}^{102} PE_j$, is used to evaluate the performance of each method.
	The hyperparameter in each method is chosen based on cross-validation, except ESC prior.
	Because applying cross-validation to  ESC prior is prohibitive due to heavy computation, we set the hyperparameters in  ESC prior at $\gamma=0.1$, $\nu_0=0$, $c_1=0.0005$ and $c_2=1$ as suggested by \cite{lee2019minimax}.
	Figure \ref{fig:call_cencer} represents prediction errors at each time point.
	Averages of prediction errors for the methods proposed in this paper, \cite{yu2017learning}, \cite{lee2019minimax} and \cite{khare2019scalable} are $0.5502$, $0.5531$, $0.5576$ and $0.5708$, respectively.
	It suggests that a local dependence structure is more suitable for the call center data than an arbitrary dependence structure, which makes sense due to the nature of the data.
	
	\begin{figure*}[!tb]
		\centering
		\includegraphics[width=13.cm,height=10.cm]{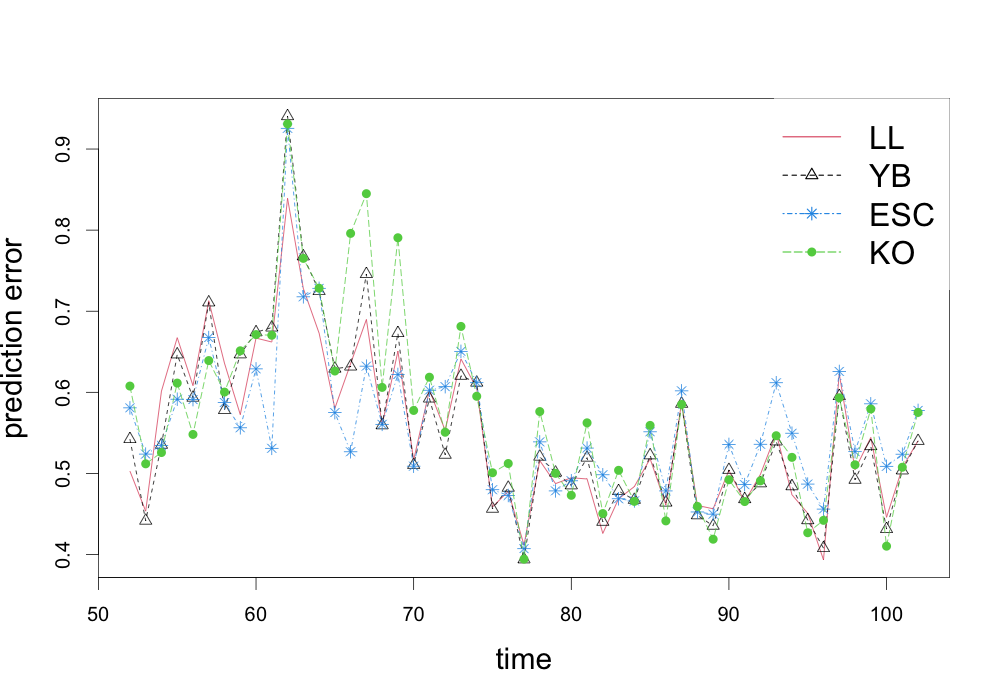}
		\vspace{-.2cm}
		\caption{
			Prediction errors at each time point for each method.
			LL, YB, ESC and KO represent the methods proposed in this paper, \cite{yu2017learning}, \cite{lee2019minimax} and \cite{khare2019scalable}, respectively.
		}
		\label{fig:call_cencer}
	\end{figure*}

	%
	%

	We further illustrate the performance of  LANCE prior in a classification problem.
	The GunPointAgeSpan data set, which is publicly available at \url{http://timeseriesclassification.com}, is used to conduct the quadratic discriminant analysis (QDA). 
	This consists of the two GunPoint data sets released in 2003 and 2018, respectively, and each year, the same two actors (one male and one female) participated in the experiment.
	There are two classes in a data set: Gun and Point.
	For the Gun class, the actors hold a gun and point the gun toward a target, while they point with just their fingers (without a gun) in the Point class.
	\bch The x-axis coordinates of centroid of the hand are recorded over five seconds of the movement based on 30 frames per second, which results in 150 frames $(p=150)$ per action.\ech
	Thus, this data set also has a time ordering between the variables.
	The GunPointAgeSpan has 135 training set $(n=135)$ and 316 test set, and the purpose of the analysis is to classify test observations into the two classes (Gun and Point).
	The numbers of observations corresponding to the Gun class are 68 and 160 in the training and test data, respectively.
	The training and test data sets are centered.

	For each data $x$ in the test set, the quadratic discriminant score $\delta_k(x)$ is calculated as follows:
	\bea
	\delta_k(x)  &=& \frac{1}{2} \log \det ( \Omega_k )  - \frac{1}{2} (x -  {\mu}_k)^T  \Omega_k (x-  {\mu}_k) + \log \Big( \frac{n_k}{n} \Big), \quad k=1,2, 
	\eea
	where $\mu_k$ and $\Omega_k$ are the mean vector and precision matrix for the class $k=1,2$, respectively.
	Here, $n_k$ is the number of observations for the class $k$, thus we have $n_1=68$ and $n_2=67$.
	To conduct the QDA, we estimate the unknown parameters $\mu_k$ and $\Omega_k$ using the training data set. 
	They are plugged into $\delta_k(x)$ similarly to the phone call center data example, and $x$ is then classified as the class $\hat{k} = \argmax_k \delta_k(x)$.
	The performances of  LANCE prior, YB method, ESC prior and CSCS method are compared, where cross-validation is used for each method except  ESC prior.

	\begin{table}[!tb]
		\centering
		\caption{Classification errors for the test set in the GunPointAgeSpan data.}
		\begin{tabular}{l | rrrr}
			\hline
			& LL & YB &  ESC & KO \\ \hline
			Error & 0.2310  & 0.3956 & 0.2437 & 0.3704 \\ 
			\hline
		\end{tabular}
		\label{table:class_error}
	\end{table}
	Table \ref{table:class_error} represents classification errors for the test set based the QDA using $\hat{\Omega}_k$ estimated by each method.
	For this data set, Bayesian methods seem to achieve lower classification errors than the penalized likelihood approaches, while  LANCE prior achieves the lowest classification error.
	Despite similar performance,  LANCE prior has a clear advantage over  ESC prior by enabling a scalable cross-validation to select the hyperparameter.
	In practice, there is no guideline for choosing the hyperparameters in  ESC prior, which can dramatically affect the performance.

	\section{Discussion}\label{sec_disc}
	
	In this paper, we propose a Bayesian procedure for high-dimensional local dependence learning, where variables close to each other are more likely to be correlated.
	The proposed prior, LANCE prior, allows an exact computation of posteriors, which enables scalable inference even in high-dimensional settings.
	Furthermore, it provides a scalable Bayesian cross-validation to choose the hyperparameters.
	We establish selection consistency for the local dependence structure and posterior convergence rates for the Cholesky factor.
	The required conditions for these theoretical results are significantly weakened compared with the existing literature.
	Simulation studies in various settings show that  LANCE prior outperforms other contenders in terms of the ROC curve, cross-validation-based analysis and computation time. 
	Two real data analyses based on the phone call center and gun point data illustrate the satisfactory performance of the proposed method in linear prediction and classification problems, respectively. 	
	
	\bch 
	It is worth mentioning that LANCE prior is only applicable when $k_j$ is smaller than $n$ due to $(\bfX_{j (k_j)}^T \bfX_{j (k_j)})^{-1}$ term in the conditional prior for $a_j^{(k_j)}$.
	Although we rule out this situation by introducing condition (A3), in practice, we can modify LANCE prior to allow $k_j >n$  when needed.
	Specifically, we can modify the conditional prior for $a_j^{(k_j)}$ to 
	\bean\label{mod_a_prior}
	a_j^{(k_j)} \mid d_j, k_j &\overset{ind.}{\sim}& N_{k_j} \Big(  \tilde{a}_j^{(k_j)} ,    \frac{d_j}{\gamma}\big( \bfX_{j (k_j)}^T \bfX_{j (k_j)} + c I_{k_j} \big)^{-1}     \Big) , \,\, j=2,\ldots, p,
	\eean
	for some constant $c>0$, where $\tilde{a}_j^{(k_j)}  = \big( \bfX_{j (k_j)}^T \bfX_{j (k_j)} + c I_{k_j}  \big)^{-1} \bfX_{j (k_j)}^T \tilde{X}_j$.
	Note that the above prior \eqref{mod_a_prior} has a variance stabilizing term $c I_{k_j}$.
	Then, the resulting $\alpha$-fractional posterior has the following closed form:
	\bea
	\begin{split}
		a_{j}^{(k_j)} \mid d_j, k_j, \bfX_n \,\,&\overset{ind.}{\sim}\,\,  N_{k_j} \Big(  \big\{  \bfX_{j (k_j)}^T \bfX_{j (k_j)} +  \frac{c \gamma}{\alpha+\gamma}  I_{k_j}  \big\}^{-1} \bfX_{j (k_j)}^T \tilde{X}_j  ,  \\
		& \hspace{2.7cm}  d_j \big\{ (\alpha+\gamma) \bfX_{j (k_j)}^T \bfX_{j (k_j)} +  c \gamma I_{k_j}  \big\}^{-1}    \Big) , \,\, j=2,\ldots, p,   \\
		d_j \mid k_j , \bfX_n \,\,&\overset{ind.}{\sim}\,\, IG \Big(  \frac{\alpha n + \nu_0}{2} , \, \frac{\alpha n}{2}\tilde{d}_{j}^{(k_j)}  \Big) , \,\, j=1,\ldots, p , \\
		\pi_\alpha(k_j \mid \bfX_n)  \,\,&\propto\,\,  \pi(k_j) \Big\{ \big( \bfX_{j (k_j)}^T \bfX_{j (k_j)} + c I_{k_j}\big)^{-1} \big( \bfX_{j (k_j)}^T \bfX_{j (k_j)}  + \frac{ c\gamma}{\alpha+\gamma} I_{k_j} \big)   \Big\}^{-\frac{k_j}{2} }   \\
		& \hspace{1.cm} \times  \Big( 1+ \frac{\alpha}{\gamma}  \Big)^{- \frac{k_j}{2}}  \big(  \tilde{d}_{j}^{(k_j)} \big)^{- \frac{\alpha n + \nu_0}{2}} 
		, \,\, j=2,\ldots,p  ,
	\end{split}
	\eea
	where 
	\bea
	\tilde{d}_{j}^{(k_j)} = n^{-1}\tilde{X}_j^T \big\{ I_n  -  \bfX_{j (k_j)} \big( \bfX_{j (k_j)}^T \bfX_{j (k_j)} + \frac{c\gamma}{\alpha+\gamma} I_{k_j} \big)^{-1} \bfX_{j (k_j)}^T \big\} \tilde{X}_j  + \hspace{2cm}  \\
	\tilde{X}_j^T \bfX_{j (k_j)} \big\{ \big( \bfX_{j (k_j)}^T \bfX_{j (k_j)} + \frac{c \gamma}{\alpha+\gamma} I_{k_j}\big)^{-1} - (\bfX_{j (k_j)}^T \bfX_{j (k_j)} + c I_{k_j} )^{-1}    \big\} \bfX_{j (k_j)}^T \tilde{X}_j \, \gamma/ (\alpha n) .
	\eea
	Thus, the proposed LANCE prior can be modified to allow large bandwidths such that $k_j > n$.
	A sufficiently small $c>0$ would give similar results with \eqref{posteriors} in practice, although theoretical properties of the resulting $\alpha$-fractional posterior should be further investigated.
	\ech

	\bch 
	An important and interesting future direction  is to consider the estimation of covariance  or precision structures with an unknown topological ordering on the variables of a DAG.   
	Assume one specifies an arbitrary label of the variables as $x_1,\ldots, x_p$ which does not necessarily correspond to the ordered variables.  
	Let $\sigma \in \mathcal S_p$ be an element in the permutation group of $p$ variables and $G_{\sigma}$ is the corresponding true DAG structure.  
	If one is interested in learning both the DAG $G_{\sigma}$ and the resulting covariance structure,   a natural attempt is to impose a prior on the DAG $G$   along with a prior on the covariance structure given $\sigma$ or the DAG,  and then perform posterior inference.   
	However, some identifiability conditions need to be imposed for the identifiability of the DAG structure. 
	Please see Theorem 2.2 of \cite{Park2019IdentifiabilityOG} for a state-of-the-art result  on the  identifiability of a Gaussian DAG, which essentially says that a Gaussian DAG is identifiable if the uncertainty level of a node $j $ is smaller than that of its descendants,  given the non-descendants.  
	One can potentially design a prior for the variances ($d_j$s in our notation) that satisfies this ordered constraint.  
	It would be interesting to study the posterior contraction  and model selection properties of this model and design efficient MCMC algorithms.

	If one does not impose any identifiability conditions (e.g., through the prior), another direction to deal with the unknown ordering case is to  learn the equivalent class (or a representative of the class) of a DAG for all $\sigma$s, the so-called \emph{structure learning of DAGs} which will learn the covariance structure of an equivalent class without learning the topological ordering  of the variable or the underlying DAG structure.  
	A recent work \cite{zhou2021complexity} is one such example.  The key ideas are the following: Let $N_p(0, \Sigma^*)$ be the distribution of some true Gaussian DAG  model $G^*$. 
	One can show that (see definition 7 in  \cite{zhou2021complexity} for example), for any $\sigma\in \mathcal S_p$, one can have the Cholesky factor matrix $A_\sigma^*$ and  the positive diagonal matrix $D_\sigma^*$ such that  $\Sigma^*  = (I_p - A_\sigma^*)^{-1} D_\sigma^* (I_p-  (A_\sigma^*)^T )^{-1}$.  
	That is, for any $\sigma$, one can find a unique pair ($A_\sigma^*$, $D_\sigma^*)$  which corresponds to some DAG $G_\sigma^*$ called minimal I-map of the equivalent classes which has the same covariance structure as $G^*$.  
	This minimal I-map can be viewed as a representative point of the equivalent class which can be uniquely constructed.  
	The structure learning problems boil down to the learnings of  $\{G_\sigma^*, \sigma\in \mathcal S_p\}$ essentially.   
	One can design appropriate priors  that can obtain posterior consistency over  $\{G_\sigma^*, \sigma\in \mathcal S_p\}$.

	Another work  that deals with unknown ordering is \cite{permutation} which proposed to sample $K$ permutation matrices or $\sigma$s under the MCD model.
	They obtained the posterior estimates of the Cholesky factors and diagonal matrices for each permutation which are then averaged to obtain a final estimate of the precision matrix.
	\ech


	\section*{Acknowledgement}
	We are very grateful to the Associate Editor and the two reviewers for their valuable comments  which have led to great improvement in our paper.  We  would also  like to thank Guo (Hugo) Yu for very helpful discussions.


	\section*{Appendix}
	\appendix

	\section{Proofs}\label{sec_proofs}

	\begin{proof}[Proof of Theorem \ref{thm_bandwidth_sel}]
		Note that 
		\bea
		&& \pi_\alpha \big(  k_j \neq k_{0j} \text{ for at least one } 2\le j \le p  \mid \bfX_n \big) \\
		&\le& \sum_{j=2}^p \pi_\alpha ( k_j \neq k_{0j} \mid \bfX_n)    \\
		&=& \sum_{j=2}^p \Big\{   \pi_\alpha ( k_j > k_{0j} \mid \bfX_n)  +  \pi_\alpha ( k_j < k_{0j} \mid \bfX_n)    \Big\}  .
		\eea
		We first show that 
		\bea
		\sum_{j=2}^p \bbE_0 \big\{   \pi_\alpha ( k_j > k_{0j} \mid \bfX_n) \big\}  &\lra& 0
		\eea
		as $n\to\infty$.
		For any $k_j > k_{0j}$, we have
		\bea
		\pi_\alpha(k_j  \mid \bfX_n) 
		&\le& \frac{\pi_\alpha(k_j  \mid \bfX_n) }{\pi_\alpha(k_{0j}  \mid \bfX_n) }    \\
		&\le& \frac{\pi(k_j)}{\pi(k_{0j})} \Big( 1+ \frac{\alpha}{\gamma}  \Big)^{-\frac{k_j - k_{0j}}{2}}  \Bigg(  \frac{\what{d}_{j}^{(k_j)} }{\what{d}_{j}^{(k_{0j})}}  \Bigg)^{-\frac{\alpha n + \nu_0}{2} }  .
		\eea
		Furthermore, it is easy to see that $\what{d}_{j}^{(k_{j})} / \what{d}_{j}^{(k_{0j})} \sim Beta ( (n-k_j)/2 , (k_j-k_{0j})/2 )$ for any $1\le j \le p$ because 
		\bea
		n d_{0j}^{-1}  \what{d}_{j}^{(k_{j})} &=& d_{0j}^{-1} \tilde{X}_j^T ( I_n - \tilde{P}_{j k_j}) \tilde{X}_j \sim \chi_{n -k_j}^2  , \\
		n d_{0j}^{-1}  \what{d}_{j}^{(k_{0j})} &\overset{d}{\equiv}& n d_{0j}^{-1}  \what{d}_{j}^{(k_{j})}  \oplus  \chi_{k_j-k_{0j}}^2   .
		\eea 
		Thus,
		\bea
		\bbE_0  \Bigg(  \frac{\what{d}_{j}^{(k_j)} }{\what{d}_{j}^{(k_{0j})}}  \Bigg)^{-\frac{\alpha n + \nu_0}{2} } 
		&=&  \frac{ \Gamma \Big( \frac{n-k_{0j}}{2} \Big)  \Gamma \Big(  \frac{n(1-\alpha) - \nu_0 - k_j }{2} \Big)  }{ \Gamma \Big( \frac{n-k_j}{2}  \Big)    \Gamma \Big( \frac{n(1-\alpha) - \nu_0 - k_{0j} }{2} \Big) }   \\
		&\le&  \Big(  \frac{2}{1-\alpha}  \Big)^{\frac{k_j - k_{0j}}{2} } ,
		\eea
		because we assume that $\nu_0 = O(1)$ and $R_j \le n(1-\alpha)/2$.
		This implies that 
		\bea
		\sum_{j=2}^p \bbE_0 \big\{   \pi_\alpha ( k_j > k_{0j} \mid \bfX_n) \big\}  
		&=& \sum_{j=2}^p  \sum_{k_j = k_{0j}+1}^{R_j} \bbE_0 \big\{  \pi_\alpha (k_j \mid\bfX_n)  \big\} \\
		&\le& \sum_{j=2}^p  \sum_{k_j = k_{0j}+1}^{R_j}   \Big(  \frac{c_{\alpha,\gamma} }{c_1 p^{c_2} } \Big)^{k_j - k_{0j}} \\
		&\lesssim& \sum_{j=2}^p  \frac{c_{\alpha,\gamma} }{c_1 p^{c_2}}  ,
		\eea
		where $c_{\alpha,\gamma} = (1+ \alpha/\gamma)^{-1/2} \{2/(1-\alpha)\}^{1/2}$. 
		The last term is of order $o(1)$ because we assume that $c_2 >1$.

		Now we show that 
		\bea
		\sum_{j=2}^p \bbE_0 \big\{   \pi_\alpha ( k_j < k_{0j} \mid \bfX_n) \big\}  &\lra& 0
		\eea
		as $n\to\infty$.
		For given integer $2\le j \le p$ and  constant $\epsilon = \{ (1 - \alpha) /10 \} ^2$, define the sets 
		\bea
		N_j^c &=& \Big\{  \bfX_n : \lambda_{\max}(\Omega_{0n})^{-1}  (1-2\epsilon)^2 \le n^{-1} \lambda_{\min}( \bfX_{j (k_{0j})}^T \bfX_{j (k_{0j})} ) \\
		&& \quad\quad\quad \quad\quad \le  n^{-1}\lambda_{\max} (\bfX_{j (k_{0j})}^T \bfX_{j (k_{0j})}) \le \lambda_{\min}  (\Omega_{0n})^{-1} (1+2\epsilon)^2   \Big\}  ,  \\
		N_{1j}^c &=& \Big\{  \bfX_n : \Big| \frac{\what{d}_j^{(k_{0j})} }{d_{0j} }   -1 \Big| \in  \Big(  -4\sqrt{\epsilon} \frac{n - k_{0j}}{n} - \frac{k_{0j}}{n}  , \, 4\sqrt{\epsilon} \frac{n - k_{0j}}{n} - \frac{k_{0j}}{n}     \Big)  \Big\}  , \\
		N_{2 j k_j}^c  &=&  \Big\{  \bfX_n:  0 <   \frac{\what{d}_j^{(k_j)} -  \what{d}_j^{(k_{0j})}  }{d_{0j}}  < \epsilon + \frac{\what{\lambda}_{j k_j} }{n}   \Big\} ,
		\eea
		where 
		\bea
		\what{\lambda}_{j k_j}  &=& \frac{1}{d_{0j}} \| (I_n - \tilde{P}_{j k_j} )  \bfX_{j (k_{0j})}  a_{0j}^{(k_{0j})}  \|_2^2  . 
		\eea
		Let $N_{j k_j}^c = N_j^c \cap N_{1j}^c \cap N_{2 j k_j}^c$ for all $2\le j \le p$.
		Since 
		\bea
		&& \sum_{j=2}^p \bbE_0 \big\{   \pi_\alpha ( k_j < k_{0j} \mid \bfX_n) \big\}  \\
		&\le&  \sum_{j=2}^p \sum_{k_j < k_{0j}} \bbP_0 (N_{j k_j })  +  \sum_{j=2}^p \sum_{k_j < k_{0j}} \bbE_0 \Big\{ \frac{\pi_\alpha(k_j \mid \bfX_n) }{\pi_\alpha(k_{0j} \mid \bfX_n) }    I(  N_{j k_j}^c) \Big\}    \\
		&\hspace{-.3cm} \le&\hspace{-.3cm}  \sum_{j=2}^p \sum_{k_j < k_{0j}} \bbP_0 (N_{j k_j })  +  \sum_{j=2}^p \sum_{k_j < k_{0j}} \bbE_0 \Bigg\{  \frac{\pi(k_j)}{\pi(k_{0j})} \Big( 1 + \frac{\alpha}{\gamma} \Big)^{k_{0j}- k_j }  \Bigg( \frac{\what{d}_j^{(k_j)} }{\what{d}_j^{(k_{0j})} }   \Bigg)^{- \frac{\alpha n + \nu_0}{2} }   I(  N_{j k_j}^c) \Bigg\}   ,
		\eea
		we will complete the proof by showing that 
		\bean\label{Null_sets}
		\sum_{j=2}^p \sum_{k_j < k_{0j}} \bbP_0 (N_{j k_j })  &\lra& 0
		\eean
		and
		\bean\label{small_kj}
		\sum_{j=2}^p \sum_{k_j < k_{0j}} \bbE_0 \Bigg\{  \frac{\pi(k_j)}{\pi(k_{0j})} \Big( 1 + \frac{\alpha}{\gamma} \Big)^{k_{0j}- k_j }  \Bigg( \frac{\what{d}_j^{(k_j)} }{\what{d}_j^{(k_{0j})} }   \Bigg)^{- \frac{\alpha n + \nu_0}{2} }   I(  N_{j k_j}^c) \Bigg\}  &\lra& 0
		\eean
		as $n\to\infty$.

		To show \eqref{Null_sets}, it suffices to show that 
		\bea
		\sum_{j=2}^p \sum_{k_j < k_{0j}} \Big\{   \bbP_0 (N_{j }) + \bbP_0 (N_{1j })+ \bbP_0 (N_{2j k_j })    \Big\}  &\lra& 0  
		\eea
		as $n\to\infty$.
		Note that 
		\bea
		\sum_{j=2}^p \sum_{k_j < k_{0j}}  \bbP_0 (N_{j }) 
		&\le& \sum_{j=2}^p \sum_{k_j < k_{0j}}  4 \exp \Big(  - \frac{n\epsilon^2 }{2} \Big)  \\
		&\le& \sum_{j=2}^p 4 \exp \Big(  - \frac{n\epsilon^2 }{2}  + \log k_{0j}    \Big) \\
		&\le&   4 \exp \Big(  - \frac{n\epsilon^2 }{2}  + 2\log p    \Big) \,\, =\,\, o(1)
		\eea
		for all sufficiently large $n$, where the first inequality follows from Corollary 5.35 in \cite{eldar2012compressed}.
		Since $n \what{d}_j^{(k_{0j})} / d_{0j} \sim \chi^2_{n - k_{0j}}$, we have 
		\bea
		\sum_{j=2}^p \sum_{k_j < k_{0j}}  \bbP_0 (N_{1j }) 
		&\le&  \sum_{j=2}^p \sum_{k_j < k_{0j}}  2 \exp  \{ - \epsilon ( n - k_{0j}) \}  \\
		&\le& \sum_{j=2}^p 2 \exp  \{ - \epsilon ( n - k_{0j})  + \log k_{0j}  \}   \\
		&\le&   2 \exp  \Big( - \frac{\epsilon n}{2}  + 2\log p  \Big)  \,\, =\,\, o(1)
		\eea
		for all sufficiently large $n$,  by the concentration inequality for chi-square random variables in Lemma 1 of \cite{laurent2000adaptive}.
		Furthermore, because $n(\what{d}_j^{(k_j)} -  \what{d}_j^{(k_{0j})} )/ d_{0j} \sim \chi^2_{k_{0j}-k_j}(\what{\lambda}_{j k_j})$, where $\chi_m^2(\lambda)$ denotes the noncentral chi-square distribution with the degrees of freedom $m$ and the noncentrality parameter $\lambda$, we have
		\bea
		&&  \sum_{j=2}^p \sum_{k_j < k_{0j}}  \bbP_0 (N_{2 k_j }) \\
		&\lesssim& \sum_{j=2}^p \sum_{k_j < k_{0j}}  \left[ \Big\{  \frac{\epsilon n}{2(k_{0j}- k_j )}  \Big\}^{\frac{k_{0j}-k_j}{2}} \exp \Big(\frac{k_{0j}-k_j  }{2} - \frac{\epsilon n}{2} \Big)  + \bbE_0 \Big( \frac{\what{\lambda}_{j k_j}}{\epsilon n} e^{-\frac{\epsilon^2 n^2}{32 \what{\lambda}_{j k_j} } }  \wedge  1 \Big)  \right]  \\ 
		&\le& \sum_{j=2}^p \sum_{k_j < k_{0j}} \left[  \exp \Big( - \frac{\epsilon n}{4} \Big) + \bbE_0 \Big\{ \frac{\what{\lambda}_{j k_j}}{\epsilon n} e^{-\frac{\epsilon^2 n^2}{32 \what{\lambda}_{j k_j} } }  I(N_j^c) \Big\}  + \bbP_0 (N_j)   \right]  \\
		&\le&  \sum_{j=2}^p \sum_{k_j < k_{0j}}   \left[  \exp \Big( - \frac{\epsilon n}{4} \Big) + \exp \Big\{ - \frac{\epsilon^2 n \lambda_{\max}(\Omega_{0n})^{-1}\lambda_{\min}(\Omega_{0n})   }{128(1+2\epsilon)^2}   \Big\} + 4 \exp \Big(  - \frac{n\epsilon^2 }{2} \Big)  \right]  \\
		&=& o(1)
		\eea
		by Lemma 4 of \cite{shin2018scalable} and conditions (A1), (A3) and (A4).
		The last inequality holds because 
		\bea
		\what{\lambda}_{j k_j} 
		&\le& \lambda_{\max} (\bfX_{j (k_{0j})}^T \bfX_{j (k_{0j})}) d_{0j}^{-1} \| a_{0j}^{(k_{0j})} \|_2^2  \\
		&\le& n \lambda_{\min}(\Omega_{0n})^{-1} (1+2\epsilon)^2 \big\{  d_{0j}^{-1} + \|d_{0j}^{-1/2}(e_j - a_{0j} )\|_2^2   \big\} \\
		&\le& 2n \lambda_{\min}(\Omega_{0n})^{-1} (1+2\epsilon)^2 \lambda_{\max}(\Omega_{0n}) ,
		\eea
		where $e_j$ is the unit vector whose the $j$th element is 1 and the others are zero.
		Thus, we have proved \eqref{Null_sets}.

		Now, we complete the proof by showing \eqref{small_kj}.
		For a given $k_j < k_{0j}$, because $-\log (1-x) \le x/(1-x)$ for any $x <1$, it can be shown that 
		\bea
		\Bigg( \frac{\what{d}_j^{(k_j)} }{\what{d}_j^{(k_{0j})} }   \Bigg)^{- \frac{\alpha n + \nu_0}{2} } 
		&=& \Bigg(  1 -  \frac{\what{d}_j^{(k_{0j})} - \what{d}_j^{(k_{j})} }{ \what{d}_j^{(k_{0j})} }   \Bigg)^{- \frac{\alpha n + \nu_0}{2} }   \\
		&\le&  \exp  \Bigg\{  \frac{ \alpha + \frac{\nu_0}{n} }{2d_{0j}( 1 + \what{Q}_{j k_j} ) }  n \Big( \what{d}_j^{(k_{0j})}  - \what{d}_j^{(k_{j})}   \Big)  \Bigg\}  ,
		\eea
		where 
		\bea
		\what{Q}_{j k_j}  &=&  \frac{\what{d}_j^{(k_{0j})} }{d_{0j} }   -1 +  \frac{\what{d}_j^{(k_j)} -  \what{d}_j^{(k_{0j})}  }{d_{0j} }   .
		\eea
		Then, we have
		\bea
		\what{Q}_{j k_j}  &\le& 4\sqrt{\epsilon} \frac{n - k_{0j}}{n} - \frac{k_{0j}}{n} + \epsilon + \frac{\what{\lambda}_{j k_j} }{n}  \,\,\le\,\, 5 \sqrt{\epsilon} +\frac{\what{\lambda}_{j k_j} }{n}   \,\,=:\,\, \what{Q}_{1j}  , \\
		\what{Q}_{j k_j}  &\ge& -4\sqrt{\epsilon} \frac{n - k_{0j}}{n} - \frac{k_{0j}}{n}  \,\,\ge\,\, -5 \sqrt{\epsilon} \,\,=:\,\, \what{Q}_{2j}
		\eea
		on the event $N_{j k_j}^c$.
		Further note that 
		\bea
		&&  n \big(  \what{d}_j^{(k_{0j})}  - \what{d}_j^{(k_j)}  \big) \\
		&=& \tilde{X}_j^T ( \tilde{P}_{j k_j}  - \tilde{P}_{j k_{0j}} ) \tilde{X}_j  \\
		&\overset{d}{\equiv}&  - \| (I_n  -\tilde{P}_{j k_j} )  \bfX_{j (k_{0j})} a_{0j}^{(k_{0j})} \|_2^2 - 2 \tilde{\epsilon}_j^T (I_n - \tilde{P}_{j k_j}) \bfX_{j (k_{0j})} a_{0j}^{(k_{0j})} + \tilde{\epsilon}_j^T  ( \tilde{P}_{j k_j}  - \tilde{P}_{j k_{0j}} ) \tilde{\epsilon}_j  \\
		&\le& - d_{0j} \what{\lambda}_{j k_j} - 2 \tilde{\epsilon}_j^T (I_n - \tilde{P}_{j k_j}) \bfX_{j (k_{0j})} a_{0j}^{(k_{0j})}  \\
		&=:& - d_{0j} \what{\lambda}_{j k_j}   -2 V_{j k_j} ,
		\eea
		where $\tilde{\epsilon}_j \sim N_n(0, d_{0j}I_n)$ and $V_{j k_j}/ \sqrt{d_{0j}} \sim N(0, d_{0j} \what{\lambda}_{j k_j} )$ under $\bbP_0$ given $\bfX_{j (k_{0j})}$.
		Then, 
		\bea
		&&  \bbE_0 \left\{  \Bigg( \frac{\what{d}_j^{(k_j)} }{\what{d}_j^{(k_{0j})} }   \Bigg)^{- \frac{\alpha n + \nu_0}{2} } I(N_{j k_j}^c)   \,\,\Big|\,\, \bfX_{j (k_{0j})} \right\}  \\
		&\le&  \bbE_0 \left[  \exp  \Bigg\{  \frac{ \alpha + \frac{\nu_0}{n} }{2d_{0j}( 1 + \what{Q}_{j k_j} ) }  n \Big( \what{d}_j^{(k_{0j})}  - \what{d}_j^{(k_{j})}   \Big)  \Bigg\}  I(N_{j k_j}^c)    \,\,\Big|\,\, \bfX_{j (k_{0j})} \right]     \\
		&\le&  \bbE_0 \left[  \sum_{ \what{Q}_{j k_j} \in \{ \what{Q}_{1j}, \what{Q}_{2j} \} }  \exp  \Bigg\{  -\frac{ \alpha + \frac{\nu_0}{n} }{2d_{0j}( 1 + \what{Q}_{j k_j} ) }  \Big( d_{0j} \what{\lambda}_{j k_j}   + 2 V_{j k_j}  \Big)  \Bigg\}  I(N_{j k_j}^c)   \,\,\Big|\,\, \bfX_{j (k_{0j})} \right]      \\
		&\le&  2\exp \Big\{ -\frac{ \alpha + \frac{\nu_0}{n} }{2( 1 +  5\sqrt{\epsilon} + \what{\lambda}_{j k_j}/n ) }  \Big(  1-    \frac{ \alpha + \frac{\nu_0}{n} }{  1 - 5\sqrt{\epsilon}  }  \Big)  \what{\lambda}_{j k_j}   \Big\}    \\
		&\le& 2\exp \Big\{ -\frac{ \alpha + \frac{\nu_0}{n} }{2( 1 +  5\sqrt{\epsilon} + \what{\lambda}_{j k_j}/n  ) }  \Big(  1-    \frac{ \alpha + \frac{\nu_0}{n} }{  1 - 5\sqrt{\epsilon}  }  \Big)  \what{\lambda}_{j k_j}   \Big\}    \,\, \equiv \,\, 2 \what{\Lambda}_{j k_j}   ,
		\eea
		where the third inequality follows from the moment generating function of normal distribution.
		Note that
		\bea
		1 -  \frac{ \alpha + \frac{\nu_0}{n} }{  1 - 5\sqrt{\epsilon}  } &>& \frac{1 - \alpha - \frac{\nu_0}{n} }{2}  ,
		\eea
		and by Lemma 5 of \cite{arias2014estimation}, 
		\bea
		\what{\lambda}_{j k_j}  &=& d_{0j}^{-1} \| (I_n  -\tilde{P}_{j k_j} )  \bfX_{j (k_{0j})} a_{0j}^{(k_{0j})} \|_2^2 \\
		&\ge&   \lambda_{\min} ( \bfX_{j (k_{0j})}^T \bfX_{j (k_{0j})} ) (k_{0j}-k_j) \min_{(j,l): a_{0,jl} \neq 0 }( a_{0,jl}^2 /d_{0j})     \\
		&\ge& n  \lambda_{\max}(\Omega_{0n})^{-1} (1-2\epsilon)^2 (k_{0j} - k_j)  \min_{(j,l): a_{0,jl} \neq 0 } (a_{0,jl}^2 /d_{0j} )  
		\eea
		on the event $N_{j k_j}^c$.
		Therefore, on the event $N_{j k_j}^c$,
		\bea
		&& \frac{ \alpha + \frac{\nu_0}{n} }{2( 1 +  5\sqrt{\epsilon} + \what{\lambda}_{j k_j}/n  ) }  \Big(  1-    \frac{ \alpha + \frac{\nu_0}{n} }{  1 - 5\sqrt{\epsilon}  }  \Big)  \what{\lambda}_{j k_j}  \\
		&\ge&  \frac{1}{4} \Big( \alpha + \frac{\nu_0}{n} \Big)\Big(  1- \alpha - \frac{\nu_0}{n} \Big)  \frac{\what{\lambda}_{j k_j} }{1+ 5\sqrt{\epsilon}+ \what{\lambda}_{j k_j} /n} \\
		&=& \frac{1}{4} \Big( \alpha + \frac{\nu_0}{n} \Big)\Big(  1- \alpha - \frac{\nu_0}{n} \Big)  \Big(  \frac{1+5\sqrt{\epsilon} }{\what{\lambda}_{j k_j}} + \frac{1}{n}  \Big)^{-1}\\
		&\ge& \frac{1}{4} \Big( \alpha + \frac{\nu_0}{n} \Big)\Big(  1- \alpha - \frac{\nu_0}{n} \Big)  \Big\{   \frac{(1+5\sqrt{\epsilon})  \lambda_{\max}(\Omega_{0n})    }{(1-2\epsilon)^2 (k_{0j}- k_j) \min_{(j,l): a_{0,jl} \neq 0 } (a_{0,jl}^2 /d_{0j} )  }  + \frac{1}{n}  \Big\}^{-1} \\
		&\ge& \frac{1}{4} \Big( \alpha + \frac{\nu_0}{n} \Big)\Big(  1- \alpha - \frac{\nu_0}{n} \Big)   \Big\{   \frac{(1+5\sqrt{\epsilon})    }{(1-2\epsilon)^2 (k_{0j}- k_j) C_{\rm bm}\log p  }  + \frac{1}{n}  \Big\}^{-1}  \\
		&\ge& \frac{1}{8} \Big( \alpha + \frac{\nu_0}{n} \Big)\Big(  1- \alpha - \frac{\nu_0}{n} \Big)   \frac{(1-2\epsilon)^2 (k_{0j}- k_j) C_{\rm bm} \log p  }{1 +5\sqrt{\epsilon} }\\
		&\ge& (k_{0j} - k_j) M_{\rm  bm} \log p 
		\eea
		for all sufficiently large $n$ and any $0.6 \le \alpha <1$, by conditions (A2) and (A4), where $C_{\rm bm} = 10 M_{\rm bm} / \{(\alpha + \nu_0/n )(1- \alpha - \nu_0/n) \}$.
		It implies that 
		\bea
		&&  \sum_{j=2}^p \sum_{k_j < k_{0j}} \bbE_0 \Big\{ \frac{\pi_\alpha(k_j \mid \bfX_n) }{\pi_\alpha(k_{0j} \mid \bfX_n) }    I(  N_{j k_j}^c) \Big\} \\
		&\le&  \sum_{j=2}^p \sum_{k_j < k_{0j}}  ( c_1 p^{c_2} )^{k_{0j} - k_j } \Big( 1+ \frac{\alpha}{\gamma}  \Big)^{\frac{k_{0j}-k_j}{2}} \bbE_0  \Big[ \what{\Lambda}_{j k_j} \{ I(  N_{j k_j}^c) + I(  N_{j k_j}) \}  \Big]  \\
		&\lesssim&  \sum_{j=2}^p \sum_{k_j < k_{0j}}  ( c_1 p^{c_2} )^{k_{0j} - k_j } \Big( 1+ \frac{\alpha}{\gamma}  \Big)^{\frac{k_{0j}-k_j}{2}} \Big[ \exp \Big\{ - (k_{0j}-k_j)M_{\rm bm} \log p  \Big\} + \bbP_0 (N_{j k_j})   \Big]  \\
		&\lesssim&  \sum_{j=2}^p \sum_{k_j < k_{0j}}  \bigg\{  \frac{(1+ \alpha / \gamma )^{1/2}}{ p^{M_{\rm bm} - c_2 } }   \bigg\}^{k_{0j}-k_j} \,\,=\,\, o(1)  ,
		\eea
		because we assume that $M_{\rm bm} > c_2 +1$.
		This completes the proof.
	\end{proof}

	\begin{proof}[Proof of Theorem \ref{thm_post_conv}]
		By Theorem \ref{thm_bandwidth_sel}, we can focus on the event  $\cap_{j=2}^p \{k_j : k_j = k_{0j}  \}$.
		Then, on the event $\cap_{j=2}^p \{k_j : k_j = k_{0j}  \}$, we have
		\bea
		\|A_n - A_{0n}\|_{\max} &=& \max_{2\le j \le p} \|a_j^{(k_{0j})} - a_{0j}^{(k_{0j})} \|_{\max} \\
		&\le& \max_{2\le j \le p} \|a_j^{(k_{0j})} - a_{0j}^{(k_{0j})} \|_2 ,   \\
		\|A_n - A_{0n}\|_{\infty} &=&  \max_{2\le j \le p} \|a_j^{(k_{0j})} - a_{0j}^{(k_{0j})} \|_{1}  \\
		&\le& \max_{2\le j \le p} \sqrt{k_{0}}\|a_j^{(k_{0j})} - a_{0j}^{(k_{0j})} \|_2  .
		\eea
		\bch 
		Furthermore, because $\|A_n - A_{0n}\|_{F}^2 =   \sum_{j=2}^p \|a_j^{(k_{0j})} - a_{0j}^{(k_{0j})} \|_2^2 $ and $(p-1) \log p \le 4 \sum_{j=2}^p \log j$, 
		\bea
		&&  \bbE_0  \Big\{ \pi_\alpha \Big(   \sum_{j=2}^p \|a_j^{(k_{0j})} - a_{0j}^{(k_{0j})} \|_2^2   \ge C \,  \frac{ \sum_{j = 2}^p(k_{0j} + \log j)}{n}   \mid \bfX_n \Big)  \Big\}  \\
		&\le&  \sum_{j=2}^p\bbE_0  \Big\{ \pi_\alpha \Big(    \|a_j^{(k_{0j})} - a_{0j}^{(k_{0j})} \|_2^2   \ge  \frac{C}{4} \,  \frac{ (k_{0j} + \log p )}{n}   \mid \bfX_n \Big)  \Big\}  
		\eea
		for any constant $C>0$.
		If we show that 
		\bean\label{aj_vecL2_conv}
		\sum_{ j = 2}^p \bbE_0  \Big\{ \pi_\alpha \Big(    \|a_j^{(k_{0j})} - a_{0j}^{(k_{0j})} \|_2^2 \ge K \,\frac{\lambda_{\max}(\Omega_{0n})^4 }{\lambda_{\min}(\Omega_{0n})^4 } \frac{k_{0j} + \log p}{n}   \mid \bfX_n \Big)  \Big\}  &\lra& 0 
		\eean
		as $n\to\infty$, for some constant $K>0$, it completes the proof.
		
		To this end, we need to carefully modify the proofs of Lemmas 6.1--6.5 in \cite{lee2019minimax}.
		Define the sets	
		\bea
		\tilde{N}_j^c &=& \Big\{  \bfX_n : \lambda_{\max}(\Omega_{0n})^{-1}  (1-2\epsilon)^2 \le \min_{k_j \le R_j} n^{-1} \lambda_{\min}( \bfX_{j (k_{j})}^T \bfX_{j (k_{j})} ) \\
		&& \quad\quad\quad \quad\quad \le  \max_{k_j \le R_j} n^{-1}\lambda_{\max} (\bfX_{j (k_{j})}^T \bfX_{j (k_{j})}) \le \lambda_{\min}  (\Omega_{0n})^{-1} (1+2\epsilon)^2   \Big\}  , \\
		\tilde{N}_{3, k_{0}} &=& \bigcup_{2\le j \le p}  \Big\{  \bfX_n : \| \what{\V}( \bfX_{j(k_j +)} )  - \V (X_{j(k_j +)} )  \|   \ge  \frac{C}{\lambda_{\min}(\Omega_{0n}) }\sqrt{\frac{k_{0j} + \log p}{n}} \,\, \Big\}  \\
		\tilde{N}_{4, k_{0}} &=&  \bigcup_{2\le j \le p}  \Big\{  \bfX_n : \| \what{\V}^{-1}( \bfX_{j(k_j +)} )  - \V^{-1} (X_{j(k_j +)} )  \|   \\
		&& \quad\quad\quad\quad\quad\quad\quad\quad\quad\quad\quad\quad \ge \,\, \frac{ C \lambda_{\max}(\Omega_{0n})^2 }{(1-2\epsilon)^2 \lambda_{\min}(\Omega_{0n}) }\sqrt{\frac{k_{0j} + \log p}{n}} \,\, \Big\} 
		\eea
		for some constant $C>0$, 
		where $\bfX_{j(k_j +)}  = (\bfX_{j(k_j)} , \tilde{X}_j) \in \bbR^{n \times (k_j +1)}$ and $\what{\V}( \bfX_{j(k_j +)} )  = n^{-1} \bfX_{j(k_j +)}^T \bfX_{j(k_j +)}$.
		Let $\tilde{N}_{\rm union} = \big(\cup_{2\le j \le p} \tilde{N}_j\big) \cup \tilde{N}_{3, k_{0}} \cup \tilde{N}_{4, k_{0}}$.
		By following closely the line of the proofs of Lemma 6.1 and 6.2 in \cite{lee2019minimax} and  
		\bea
		\|  \what{\V}^{-1}( \bfX_{j(k_j +)} )  - \V^{-1} (X_{j(k_j +)} ) \| 
		&\le& \| \what{\V}^{-1}( \bfX_{j(k_j +)} )\|   \| \V^{-1} (X_{j(k_j +)} )\| \\
		&& \times \,\, \| \what{\V}( \bfX_{j(k_j +)} )  - \V (X_{j(k_j +)} )  \|   \\
		&\le&  \frac{\lambda_{\max}(\Omega_{0n})^2 }{(1- 2\epsilon)^2 } \| \what{\V}( \bfX_{j(k_j +)} )  - \V (X_{j(k_j +)} )  \|  
		\eea
		on $\tilde{N}_j^c$,
		we have 
		\bea
		\bbP_0 \big( \tilde{N}_{\rm union}    \big)  
		&\lesssim&  \sum_{j=2}^p \sum_{k_j \le R_j} \exp   \Big(  - \frac{n}{2} \epsilon^2  \Big) 
		+ \sum_{j=2}^p 5^{k_{0j}} \exp \Big\{  - C   (k_{0j}+ \log p )  \Big\}   \\
		&\lesssim&  \exp   \Big(  - \frac{n}{2} \epsilon^2  + 2\log (n\vee p)  \Big) 
		+ \max_{2\le j \le p} \exp \Big\{  -  \frac{C}{2}  (k_{0j}+ \log p )  \Big\}  \\
		&=& o (1) 
		\eea
		for some large constant $C>0$.
		Thus, in the rest, we focus on the event $\tilde{N}_{\rm union}^c$.
		Then,
		\bean
		\| \what{a}_j^{(k_{0j})}  -   a_{0j}^{(k_{0j})} \|_2  
		&\le& \| \V^{-1} (\bfX_{j(k_{0j})}) \{ \what{\C}(\bfX_{j(k_{0j})} , \tilde{X}_j)  - \what{\C}(X_{j(k_{0j})} , {X}_j) \} \|_2  \nonumber\\
		&+&  \|  \{ \what{\V}^{-1} (\bfX_{j(k_{0j})}) - \V^{-1} (X_{j(k_{0j})}) \}  \what{\C} (\bfX_{j(k_{0j})} , \tilde{X}_j)  \|_2  \nonumber \\
		&\le& \| \V^{-1} (\bfX_{j(k_{0j})}) \|  \| \what{\V}( \bfX_{j(k_j +)} )  - \V (X_{j(k_j +)} )  \|  \nonumber \\
		&+& \| \what{\V}^{-1}( \bfX_{j(k_j +)} )  - \V^{-1} (X_{j(k_j +)} )  \|  \|\what{\V}(\bfX_{j(k_j +)}) \|  \nonumber \\
		&\lesssim& \frac{\lambda_{\max}(\Omega_{0n})^2 }{\lambda_{\min}(\Omega_{0n})^2 } \sqrt{ \frac{k_{0j} + \log p }{n} } . \label{aj_vecL2_conv1}
		\eean
		
		On the other hand, 
		\bea
		\| a_j^{(k_{0j})} - \what{a}_j^{(k_{0j})} \|_2 
		&\lesssim& \lambda_{\max}(\Omega_{0n})  \sqrt{\frac{d_j}{n}} \Big\|  \sqrt{\frac{n(\alpha+\gamma)}{d_j}}\what{\V}^{1/2}(\bfX_{j (k_{0j})})  \big( a_j^{(k_{0j})} - \what{a}_j^{(k_{0j})} \big)    \Big\|_2   \\
		&=:& \lambda_{\max}(\Omega_{0n}) \sqrt{\frac{d_j}{n}}  \big\|  std( a_j^{(k_{0j})} )  \big\|_2
		\eea
		and
		\bea
		&&  \pi_\alpha \Big(  \| a_j^{(k_{0j})} - \what{a}_j^{(k_{0j})} \|_2 \ge  \frac{C \lambda_{\max}(\Omega_{0n}) }{ \sqrt{\lambda_{\min}(\Omega_{0n})} }   \sqrt{\frac{ k_{0j} + \log p }{n}} \mid \bfX_n\Big) \\
		&\le& \pi_\alpha \Big(  \sqrt{d_j} \big\|  std( a_j^{(k_{0j})} )  \big\|_2 \ge   \frac{C'}{  \sqrt{\lambda_{\min}(\Omega_{0n})}} \sqrt{ k_{0j} + \log p } ,\, d_j \le \frac{ 2(1+2\epsilon)^2}{\lambda_{\min}(\Omega_{0n}) }\mid \bfX_n\Big)  \\
		&+& \pi_\alpha \Big( d_j > 2(1+2\epsilon)^2 \lambda_{\min}(\Omega_{0n})^{-1} \mid \bfX_n \Big)  \\
		&\le& \pi_\alpha \Big(  \big\|  std( a_j^{(k_{0j})} )  \big\|_2 \ge   C'' \sqrt{ k_{0j} + \log p } \mid \bfX_n\Big)  \\
		&+& \pi_\alpha \Big( d_j > 2(1+2\epsilon)^2 \lambda_{\min}(\Omega_{0n})^{-1} \mid \bfX_n \Big)
		\eea
		for some positive constants $C, C'$ and $C''$.
		By the proof of Lemma 6.5 in \cite{lee2019minimax} and page 29 of \cite{boucheron2013concentration}, 
		\bean
		\exp \Big\{ - \alpha n \Big( \frac{\epsilon}{2}\Big)^2  \Big\} 
		&\ge&  \pi_\alpha \Big(   d_j^{-1} - \frac{\alpha n +\nu_0}{\alpha n \what{d}_j^{(k_j)}   }  <  - \frac{1}{\what{d}_j^{(k_j)}  } \Big( \epsilon \sqrt{\frac{\alpha n + \nu_0}{\alpha n }} + \frac{\epsilon^2}{2}   \Big) \mid \bfX_n \Big)  \nonumber\\
		&\ge& \pi_\alpha \Big(   d_j^{-1} - \frac{\alpha n +\nu_0}{\alpha n \what{d}_j^{(k_j)}   }  <   \frac{\lambda_{\min}(\Omega_{0n}) }{2(1+\epsilon)^2 } - \frac{\alpha n +\nu_0}{\alpha n \what{d}_j^{(k_j)}  }     \mid \bfX_n \Big)  \nonumber    \\
		&\ge& \pi_\alpha \Big( d_j > 2(1+2\epsilon)^2 \lambda_{\min}(\Omega_{0n})^{-1} \mid \bfX_n \Big) . \label{ineq_M2}
		\eean
		Because $\|std( a_j^{(k_{0j})} )  \|_2^2 \mid \bfX_n \sim \chi_{k_{0j}}^2$ and Lemma 1 in \cite{laurent2000adaptive}, 
		\bea
		\pi_\alpha \Big(  \big\|  std( a_j^{(k_{0j})} )  \big\|_2 \ge   C'' \sqrt{ k_{0j} + \log p } \mid \bfX_n\Big)   
		&=&  o(1)  
		\eea
		for some constant $C''>0$, 
		which implies
		\bean\label{aj_vecL2_conv2}
		\pi_\alpha \Big(  \| a_j^{(k_{0j})} - \what{a}_j^{(k_{0j})} \|_2 \ge   \frac{C \lambda_{\max}(\Omega_{0n}) }{\lambda_{\min}(\Omega_{0n}) }   \sqrt{\frac{ k_{0j} + \log p }{n}} \mid \bfX_n\Big)  &=& o(1) 
		\eean
		for some constant $C>0$.
		Then, \eqref{aj_vecL2_conv1} and \eqref{aj_vecL2_conv2} imply that \eqref{aj_vecL2_conv} holds with some constant $K>0$ not depending on unknown parameters.
		\ech

		%
		
	\end{proof}

	\begin{proof}[Proof of Theorem \ref{thm_post_conv_nobetamin}]\text{}\\
		{\bf The posterior convergence rate under the matrix $\ell_\infty$-norm. } 
		For some constant $K_{\rm chol} >0$, let $\delta_n = K_{\rm chol} \sqrt{ \lambda_{\max}(\Omega_{0n}) /\lambda_{\min}(\Omega_{0n}) } \sqrt{k_0 \log p /n}$. 
		Note that
		\bean
		&& \bbE_0 \big\{   \pi_\alpha \big(   \|A_n - A_{0n} \|_{\infty}  \ge \sqrt{k_0} \delta_n   \mid \bfX_n \big)  \big\}  \nonumber \\
		&\le& \sum_{j=2}^p \bbE_0 \big\{   \pi_\alpha \big(   \|a_j - a_{0j} \|_{1}  \ge \sqrt{k_0} \delta_n   \mid \bfX_n \big)  \big\}   \nonumber \\
		&\le& \sum_{j=2}^p \bbE_0 \big\{   \pi_\alpha \big(   \|a_j - a_{0j} \|_{1}  \ge \sqrt{k_0} \delta_n , \, k_j \le C_{\rm dim} k_0  \mid \bfX_n \big) \big\}  \label{aj_con} \\
		&&+ \,\, \sum_{j=2}^p \bbE_0  \big\{   \pi_\alpha \big(   k_j  > C_{\rm dim} k_0  \mid \bfX_n \big) \big\}  \label{k_large}
		\eean
		for some constant $C_{\rm dim} >0$.
		
		We first focus on \eqref{k_large}. 
		We will show that 
		\bean
		&& \sum_{j=2}^p \bbE_0  \big\{   \pi_\alpha \big(   k_j  > C_{\rm dim} k_0  \mid \bfX_n \big) \big\}  \nonumber \\
		&\le& \sum_{j=2}^p \bbE_0  \big\{   \pi_\alpha \big(   k_j  > C_{\rm dim} k_0 , \,\, M_1 \le d_j \le M_2 \mid \bfX_n \big) \big\}   + o(1)  \label{dj_supp}
		\eean
		for  $M_1 = (1-2\epsilon)^2 \lambda_{\max}(\Omega_{0n})^{-1}  /2$ and $M_2 =  2(1+2\epsilon)^2 \lambda_{\min}(\Omega_{0n})^{-1}$.
		Let $\tilde{N}_{\rm union}$ be the set defined at the proof of Theorem \ref{thm_post_conv}.
		Note that by similar arguments used in \eqref{ineq_M2}, the proof of Lemma 6.5 in \cite{lee2019minimax} and page 29 of \cite{boucheron2013concentration},
		\bea
		\pi_\alpha ( M_1 \le d_j \le M_2 \mid  \bfX_n ) \ge 1 - 2 e^{- C n}
		\eea
		for some constant $C>0$ and any integer $2 \le j \le p$ on the event $\tilde{N}_{\rm union}^c$.
		Therefore, we have
		\bea
		&& \sum_{j=2}^p \bbE_0  \big\{   \pi_\alpha \big(   k_j  > C_{\rm dim} k_0  \mid \bfX_n \big) \big\}  \\
		&\le& \sum_{j=2}^p \bbE_0  \big\{   \pi_\alpha \big(   k_j  > C_{\rm dim} k_0 , \, M_1\le d_j \le M_2 \mid \bfX_n \big) \big\}  \\
		&&+\,\, \sum_{j=2}^p   \bbE_0  \big\{   \pi_\alpha \big(   d_j \in [M_1, M_2]^c \mid \bfX_n \big) I( \tilde{N}_{\rm union}^c ) \big\}  
		+ \bbP_0 \big(  \tilde{N}_{\rm union} \big)   \\
		&\le& \sum_{j=2}^p \bbE_0  \big\{   \pi_\alpha \big(   k_j  > C_{\rm dim} k_0 , \, M_1\le d_j \le M_2 \mid \bfX_n \big) \big\}  + o(1),
		\eea
		which implies that \eqref{dj_supp} holds.
		
		Now, we can show that \eqref{k_large} is of order $o(1)$ if we show that 
		\bean\label{kdj}
		\sum_{j=2}^p \bbE_0  \big\{   \pi_\alpha \big(   k_j  > C_{\rm dim} k_0 , \, M_1\le d_j \le M_2 \mid \bfX_n \big) \big\}  &=& o(1) . 
		\eean
		Let 
		\bea
		N_{nj}(k_j > C_{\rm dim} k_0 ) 
		&\hspace{-.3cm} =\hspace{-.3cm} &\hspace{-.3cm}  \sum_{k_j > C_{\rm dim} k_0} \int_{M_1}^{M_2} \int  R_{nj}( a_{j,+}^{(k_j)}, d_j)^\alpha \pi(a_j^{(k_j)} \mid d_j, k_j ) \pi(k_j) \pi(d_j)   d a_j^{(k_j)} d d_j     ,   \\
		R_{nj} (a_j , d_j)   &=& L_{nj}( a_j, d_j) / L_{nj}( a_{0j}, d_{0j})     ,
		\eea
		where
		\bea
		L_{nj}(a_j , d_j)  &=& (2\pi d_j)^{-n/2} \exp \big\{ - \| \tilde{X}_j - \bfX_{j(j-1)}a_j \|_2^2/ (2d_j)   \big\}
		\eea
		and $a_{j,+}^{(k_j)} $ be a $p$-dimensional vector such that $(a_{j,+}^{(k_j)})_{(j - k_j):(j-1)} = a_j^{(k_j)}$ and $(a_{j,+}^{(k_j)})_l =0$ otherwise.
		For any integer $2\le j \le p$, 
		\bea
		&& \bbE_0  \big\{   \pi_\alpha \big(   k_j  > C_{\rm dim} k_0 , \, M_1\le d_j \le M_2 \mid \bfX_n \big) \big\} \\
		&\le&   \bbE_0 \big\{  N_{nj}(k_j > C_{\rm dim} k_0 )  \big\} \frac{e^{\tilde{C}_2 k_{0j} } }{\pi(k_{0j}) } \tilde{C}_3 n^2 j^2  \\
		&&+\,\, \bbP_0 \Big(  \big| \|\tilde{X}_j\|_2^2 - \| \bfX_{j (k_{0j})} \what{a}_j^{(k_{0j})} \|_2^2  - n d_{0j}    \big| \le   \frac{1}{j^2 n}   \Big),
		\eea
		for some positive constants $\tilde{C}_2$ and $\tilde{C}_3$ depending only on $(\alpha, \gamma, \nu_0, \nu_0')$, by the proofs of Lemmas 6.5 and 7.1 in \cite{lee2019minimax}.
		Since $d_{0j}^{-1} (  \|\tilde{X}_j\|_2^2 - \| \bfX_{j (k_{0j})} \what{a}_j^{(k_{0j})} \|_2^2 ) \sim \chi_{n -k_{0j}}^2$ under $\bbP_0$, we have 
		\bean\label{chisq_con}
		\sum_{j=2}^p \bbP_0 \Big(  \big| \|\tilde{X}_j\|_2^2 - \| \bfX_{j (k_{0j})} \what{a}_j^{(k_{0j})} \|_2^2  - n d_{0j}    \big| \le   \frac{1}{j^2 n}   \Big)
		&\le& \sum_{j=2}^p \frac{1}{d_{0j} j^2  n } \,\,=\,\, o(1) 
		\eean
		because $d_{0j}^{-1} \le \lambda_{\max}(\Omega_{0n}) $ and $\lambda_{\max}(\Omega_{0n})  = o( n )$.
		By the proof of Lemma 7.2 in \cite{lee2019minimax}, 
		\bea
		&& \sum_{j=2}^p \bbE_0 \big\{  N_{nj}(k_j > C_{\rm dim} k_0 )  \big\} \frac{e^{\tilde{C}_2 k_{0j} } }{\pi(k_{0j}) } \tilde{C}_3 n^2 j^2   \\
		&\le& \sum_{j=2}^p \sum_{k_j > C_{\rm dim} k_0 } \pi(k_j)  \Big(  1 + C' \frac{ d_{0j} }{M_1}  \Big)^{k_j} \frac{e^{\tilde{C}_2 k_{0j} } }{\pi(k_{0j}) } \tilde{C}_3 n^2 j^2   \\
		&\le& \sum_{j=2}^p  \frac{e^{\tilde{C}_2 k_{0j} } }{\pi(k_{0j}) } \tilde{C}_3 n^2 j^2  \sum_{k_j > C_{\rm dim} k_0 } \pi(k_j)  \Big(  1 + C' \frac{ \lambda_{\max}(\Omega_{0n}) }{\lambda_{\min}(\Omega_{0n})}  \Big)^{k_j}  \\
		&\lesssim& \sum_{j=2}^p  \frac{e^{\tilde{C}_2 k_{0j} } }{\pi(k_{0j}) } \tilde{C}_3 n^2 j^2  \Big\{   1 + C' \frac{ \lambda_{\max}(\Omega_{0n}) }{\lambda_{\min}(\Omega_{0n})}  \Big\}^{ C_{\rm dim} k_0 } \Big( \frac{1}{c_1 p^{c_2} } \Big)^{C_{\rm dim} k_0}  \\
		&\lesssim& \sum_{j=2}^p \Big( \frac{ p }{c_1 p^{c_2} }  \Big)^{C_{\rm dim} k_0} \exp \big\{  \tilde{C}_2 k_{0j}  + (c_2+1) k_{0j}\log p +4 \log (n\vee j)    \big\}
		\,\, =\,\, o(1)
		\eea
		for some positive constant $C'$ and large constant $C_{\rm dim}>0$ depending only on $c_2$.
		Note that the fourth inequality holds because $\lambda_{\max}(\Omega_{0n})/ \lambda_{\min}(\Omega_{0n}) = O(p)$.
		Thus, it implies that \eqref{k_large} is of order $o(1)$.

		We can complete the proof if we show that \eqref{aj_con} is of order $o(1)$.
		Note that
		\bean
		&&  \sum_{j=2}^p \bbE_0 \big\{   \pi_\alpha \big(   \|a_j - a_{0j} \|_{1}  \ge \sqrt{k_0} \delta_n , \, k_j \le C_{\rm dim} k_0  \mid \bfX_n \big) \big\}  \nonumber\\
		&\le& \sum_{j=2}^p \bbE_0 \big\{   \pi_\alpha \big(   \|a_j - a_{0j} \|_{2}  \ge  (C_{\rm dim} +1)^{-1} \delta_n , \, k_j \le C_{\rm dim} k_0  \mid \bfX_n \big) \big\}   \label{aj_L2_kj}\\
		&\le& \sum_{j=2}^p \bbE_0 \Big\{   \pi_\alpha \Big(   \|a_j - a_{0j} \|_{2}  \ge  \frac{ (1-2\epsilon)  K_{\rm chol} }{(C_{\rm dim} +1)  \sqrt{\lambda_{\min}(\Omega_{0n})} }  \Big( \frac{k_0 \log p}{\Psi_{\min,j}(C_{\rm dim} k_0 ) } \Big)^{1/2}  ,   \nonumber\\
		&& \hspace{2.5cm}  k_j \le C_{\rm dim} k_0  \mid \bfX_n \Big) \Big\}  \,\, + \,\,  \bbP_0 \Big( \tilde{N}_{\rm union} \Big)   ,  \nonumber
		\eean
		where 
		\bea
		\Psi_{\min,j}( K ) &=&  \inf_{0 < k_j \le K}  \frac{1}{n} \lambda_{\min}( \bfX_{j (k_j)}^T \bfX_{j (k_j)} )  .
		\eea		
		Recall that  $\bbP_0 \big( \tilde{N}_{\rm union} \big) = o(1)$.
		Let $\tilde{C}_1 = (C_{\rm dim} +1)^{-1} (1-2\epsilon)$ and $\delta_{nj} = \sqrt{k_0 \log p /\Psi_{\min,j}(C_{\rm dim} k_0 )}$.
		Then, by the same arguments used in \eqref{dj_supp}, we have 
		\bea
		&& \sum_{j=2}^p \bbE_0 \Big\{   \pi_\alpha \Big(   \|a_j - a_{0j} \|_{2}  \ge  \frac{\tilde{C}_1 K_{\rm chol}}{  \sqrt{\lambda_{\min}(\Omega_{0n})} } \delta_{nj} , \, k_j \le C_{\rm dim} k_0  \mid \bfX_n \Big) \Big\}  \\
		&\le& \sum_{j=2}^p \bbE_0 \Big\{   \pi_\alpha \Big(   \|a_j - a_{0j} \|_{2}  \ge  \frac{\tilde{C}_1 K_{\rm chol}}{\sqrt{\lambda_{\min}(\Omega_{0n})} } \delta_{nj}  , \, k_j \le C_{\rm dim} k_0  ,\, M_1 \le d_j \le M_2  \mid \bfX_n \Big) \Big\}   + o(1).
		\eea
		By Lemma 7.1 in \cite{lee2019minimax}, 
		\bea
		&& \pi_\alpha \Big(   \|a_j - a_{0j} \|_{2}  \ge  \frac{\tilde{C}_1 K_{\rm chol}}{\sqrt{\lambda_{\min}(\Omega_{0n})} } \delta_{nj} , \, k_j \le C_{\rm dim} k_0  ,\, M_1 \le d_j \le M_2  \mid \bfX_n \Big)   \\
		&=&  \frac{ \sum_{0\le k_j \le C_{\rm dim} k_0} \int_{M_1}^{M_2} \int_{\|a_j - a_{0j} \|_{2}  \ge  \tilde{C}_1 K_{\rm chol} \delta_{nj}/\sqrt{\lambda_{\min}(\Omega_{0n})} } R_{nj}(a_{j,+}^{(k_j)}, d_j)^\alpha \pi(a_j^{(k_j)} \mid d_j, k_j ) \pi(k_j) \pi(d_j) d a_j^{(k_j)} d d_j   }{ \sum_{0\le k_j \le R_j} \int \int  R_{nj}(a_{j,+}^{(k_j)}, d_j)^\alpha \pi(a_j^{(k_j)} \mid d_j, k_j ) \pi(k_j) \pi(d_j) d a_j^{(k_j)} d d_j   }  \\
		&=:& \frac{N_{nj}}{D_{nj}}  \\
		&\le& N_{nj}  \frac{e^{\tilde{C}_2 k_{0j} } }{\pi(k_{0j})} \tilde{C}_3 n^2 j^2   + I \Big( \big| \|\tilde{X}_j\|_2^2 - \|\bfX_{j (k_{0j})} \what{a}_j^{(k_{0j})} \|_2^2 - n d_{0j}   \big| \le \frac{1}{j^2  n} \Big)
		\eea
		for some positive constants $\tilde{C}_2$ and $\tilde{C}_3$ depending only on $(\alpha, \gamma, \nu_0, \nu_0')$.
		Note that 
		\bea
		\bbP_0 \Big( \big| \|\tilde{X}_j\|_2^2 - \|\bfX_{j (k_{0j})} \what{a}_j^{(k_{0j})} \|_2^2 - n d_{0j}   \big| \le \frac{1}{j^2 \log n} \Big) &\le& \frac{1}{d_{0j} j^2  n}
		\eea
		by \eqref{chisq_con}, and 
		\bea
		\bbE_0 (N_{nj}) 
		&\le& \exp \Big( - \frac{1}{d_{0j} + M_2 } \frac{\tilde{C}_1^2 K_{\rm chol}^2}{\lambda_{\min}(\Omega_{0n}) } k_0 \log p     \Big)  \sum_{0\le k_j \le C_{\rm dim} k_0} \Big(  1 + C' \frac{d_{0j}}{M_1} \Big)^{k_j }  \pi (k_j)\\
		&\lesssim& \exp \Big( - C  \tilde{C}_1^2 K_{\rm chol}^2 k_0 \log p     \Big)  \sum_{0\le k_j \le C_{\rm dim} k_0} \Big\{   1 + C' \frac{ \lambda_{\max}(\Omega_{0n}) }{\lambda_{\min}(\Omega_{0n})}  \Big\}^{k_j }  \pi (k_j) \\
		&\lesssim& \exp \Big( - C  \tilde{C}_1^2 K_{\rm chol}^2 k_0 \log p     \Big)   \Big\{   1 + C' \frac{ \lambda_{\max}(\Omega_{0n}) }{\lambda_{\min}(\Omega_{0n})}  \Big\}^{C_{\rm dim} k_0 } 
		\eea
		for some constants $C, C'>0$, 
		by the proof of Lemma 7.3 of \cite{lee2019minimax}.
		Thus,
		\bea
		&& \bbE_0 (N_{nj})  \frac{e^{\tilde{C}_2 k_{0j} } }{\pi(k_{0j})} \tilde{C}_3 n^2 j^2     \\
		&\le& e^{- C  \tilde{C}_1^2 K_{\rm chol}^2 k_0 \log p }  \Big\{   1 + C' \frac{ \lambda_{\max}(\Omega_{0n}) }{\lambda_{\min}(\Omega_{0n})}  \Big\}^{C_{\rm dim} k_0 }       \frac{e^{\tilde{C}_2 k_{0j} } }{\pi(k_{0j})} \tilde{C}_3 n^2 j^2   \\
		&\le& \exp \big\{ - C  \tilde{C}_1^2 K_{\rm chol}^2 k_0 \log p +  C_{\rm dim}k_0 \log  p  + \tilde{C}_2 k_{0j} + 4 \log (n\vee j)  +  k_{0j} \log c_1 + c_2 k_{0j} \log p  \big\} .
		\eea
		Thus, it implies that 
		\bea
		\sum_{j=2}^p \bbE_0 \big\{   \pi_\alpha \big(   \|a_j - a_{0j} \|_{1}  \ge \sqrt{k_0} \delta_n , \, k_j \le C_{\rm dim} k_0  \mid \bfX_n \big) \big\} 
		&=&  o(1)
		\eea
		for some large constant $K_{\rm chol}>0$ depending only on $c_2$.

		\noindent
		{\bf The posterior convergence rate under the element-wise maximum norm. } 
		Note that 
		\bean
		&& \bbE_0 \big\{   \pi_\alpha \big(   \|A_n - A_{0n} \|_{\max}  \ge  \delta_n   \mid \bfX_n \big)  \big\}  \label{An_maximum}  \\
		&\le& \bbE_0 \big\{   \pi_\alpha \big(  \max_{2\le j \le p}  \| a_j - a_{0j} \|_{2}  \ge  \delta_n   \mid \bfX_n \big)  \big\}  \nonumber \\
		&\le& \sum_{2\le j \le p}  \bbE_0 \big\{   \pi_\alpha \big(   \| a_j - a_{0j} \|_{2}  \ge  \delta_n   \mid \bfX_n \big)  \big\} \nonumber \\
		&\le& \sum_{2\le j \le p}  \bbE_0 \big\{   \pi_\alpha \big(   \| a_j - a_{0j} \|_{2}  \ge  \delta_n , \, k_j \le C_{\rm dim} k_0   \mid \bfX_n \big)  \big\} \nonumber \\
		&& \,\, +\,\,   \sum_{2\le j \le p}  \bbE_0 \big\{   \pi_\alpha \big(   k_j  > C_{\rm dim} k_0   \mid \bfX_n \big)  \big\}  . \nonumber
		\eean
		Because we showed that \eqref{k_large} and \eqref{aj_L2_kj} are of order $o(1)$, it implies that \eqref{An_maximum} is of order $o(1)$.

		\noindent
		{\bf The posterior convergence rate under the Frobenius norm. } 
		Because $(p-1) \log p \le 4 \sum_{j=2}^p \log j$, we have 
		\bea
		&& \bbE_0 \Big\{   \pi_\alpha \big(   \|A_n - A_{0n} \|_{F}^2  \ge  K \frac{\lambda_{\max}(\Omega_{0n}) }{\lambda_{\min}(\Omega_{0n})} \frac{  k_0 \sum_{j=2}^p \log j}{n}   \mid \bfX_n \big)  \Big\}  \\
		&=& \bbE_0 \Big\{   \pi_\alpha \big(  \sum_{j=2}^p \|a_j - a_{0j} \|_{2}^2  \ge  K \frac{\lambda_{\max}(\Omega_{0n}) }{\lambda_{\min}(\Omega_{0n})} \frac{  k_0 \sum_{j=2}^p \log j}{n}   \mid \bfX_n \big)  \Big\}  \\
		&\le& \sum_{j=2}^p  \bbE_0 \Big\{   \pi_\alpha \big(  \|a_j - a_{0j} \|_{2}^2  \ge  K \frac{\lambda_{\max}(\Omega_{0n}) }{\lambda_{\min}(\Omega_{0n})}  \frac{  k_0 \log p}{4n}   \mid \bfX_n \big)  \Big\} .
		\eea
		The last term is of order $o(1)$ by the similar arguments used in \eqref{An_maximum}.
	\end{proof}

	\section{Numerical Studies: Estimation of Cholesky Factors}\label{sec:sim_est}
	
	In this section, we illustrate the performance of LANCE prior in terms of the estimation of Cholesky factors.
	We consider Models 1--3 described in Section \ref{subsec:simulated} of the main manuscript and compare the performance of LANCE prior with that of YB method \citep{yu2017learning}.
	The sample size and number of variables are varied as follows: $n \in \{100, 300\}$ and 
	$p\in \{100, 200\}$.
	To demonstrate the estimation performance of each method, cross-validation is used as described in Section \ref{subsec:simulated}.
	For LANCE prior, the posterior means of Cholesky factors $A_n$ given the posterior mode of bandwidths $k_j$ are used for estimation purpose.

	Tables \ref{table:est1} and \ref{table:est2} show the results under various loss functions when the signal sizes are between $(A_{0,\min}, A_{0,\max}) = (0.1, 0.4)$ (small signal setting) and $(A_{0,\min}, A_{0,\max}) = (0.4, 0.6)$ (large signal setting), respectively.
	Note that LL in the tables represents LANCE prior.
	In the small signal setting, YB method tends to outperform LANCE prior when $n=100$, while LANCE prior tends to outperform YB method when $n=300$.
	Although the performances of both of the methods are improved as $n$ increases, LANCE prior offers a more significant performance improvement.
	On the other hand, in the large signal setting, LANCE prior works better than YB method regardless of the sample size $n$.
	Compared with the small signal setting, LANCE prior shows slightly larger errors in the large signal setting.
	However, relatively, YB method produces much larger errors in the large signal setting.
	This is consistent with previous observations that under large signals, the performance of YB method is not satisfactory in terms of sensitivity and specificity.

	\begin{table}[!tb]
		\centering\footnotesize
		\caption{
			The mean and standard deviation (in parentheses) of the Frobenius norm, spectral norm and matrix $\ell_\infty$-norm when $(A_{0,\min}, A_{0,\max}) = (0.1, 0.4)$ based on 10 simulated data sets.
		}\vspace{.15cm}
		\begin{tabular}{c c c c c c c c}
			\hline 
			\multicolumn{2}{c}{Model 1}  & \multicolumn{3}{c}{$n=100$} & \multicolumn{3}{c}{$n=300$} \\ 
			& $p$ & $\| \cdot \|_F$ & $\| \cdot \|$ & $\| \cdot \|_\infty$ & $\| \cdot \|_F$ & $\| \cdot \|$ & $\| \cdot \|_\infty$ \\ \hline
			\multirow{2}{*}{LL} & $100$ & $2.737\, (0.264)$ & $1.002 \, (0.294)$  & $3.770 \,(2.368)$    & $1.204 \,(0.084)$ & $0.342 \,(0.047)$ & $0.631 \,(0.096)$  \\ 
			& $200$ & $4.468 \,(0.323)$ & $1.581 \,(0.247)$  & $8.058 \,(1.047)$    & $1.873 \,(0.093)$ & $0.454 \,(0.071)$ & $1.152 \,(0.409)$  \\ 
			\hline
			\multirow{2}{*}{YB} & $100$ & $2.201 \,(0.119)$ & $0.559 \,(0.045)$  & $1.051 \,(0.129)$    & $1.323 \,(0.048)$ & $0.347 \,(0.031)$  & $0.696 \,(0.103)$  \\ 
			& $200$ & $3.106 \,(0.114)$ & $0.583 \,(0.045)$  & $1.054 \,(0.096)$    & $1.923 \,(0.064)$ & $0.385 \,(0.034)$ & $0.758 \, (0.088)$  \\
			\hline 
			\multicolumn{2}{c}{Model 2}   & \multicolumn{3}{c}{$n=100$} & \multicolumn{3}{c}{$n=300$} \\ 
			& $p$ & $\| \cdot \|_F$ & $\| \cdot \|$ & $\| \cdot \|_\infty$ & $\| \cdot \|_F$ & $\| \cdot \|$ & $\| \cdot \|_\infty$ \\ \hline
			\multirow{2}{*}{LL} & $100$ & $2.550\, (0.330)$ & $0.997 \,(0.289)$  & $3.986 \,(2.103)$    & $1.097 \,(0.099)$ & $0.410 \,(0.054)$ & $1.135 \,(0.236)$  \\ 
			& $200$ & $5.771 \,(0.842)$ & $1.731 \,(0.280)$  & $7.803 \,(1.553)$    & $2.338 \,(0.090)$ & $0.602 \,(0.049)$ & $2.447 \,(0.450)$  \\ 
			\hline
			\multirow{2}{*}{YB} & $100$ & $2.403 \,(0.153)$ & $0.836 \,(0.104)$  & $2.718 \,(0.479)$    & $1.474 \,(0.129)$ & $0.549 \,(0.063)$  & $1.747 \,(0.331)$  \\ 
			& $200$ & $5.149 \,(0.176)$ & $1.418\, (0.107)$  & $6.597 \,(0.752)$    & $3.173 \,(0.194)$ & $0.925 \,(0.102)$ & $4.211 \,(0.649)$  \\
			\hline
			\multicolumn{2}{c}{Model 3}   & \multicolumn{3}{c}{$n=100$} & \multicolumn{3}{c}{$n=300$} \\ 
			& $p$ & $\| \cdot \|_F$ & $\| \cdot \|$ & $\| \cdot \|_\infty$ & $\| \cdot \|_F$ & $\| \cdot \|$ & $\| \cdot \|_\infty$ \\ \hline
			\multirow{2}{*}{LL} & $100$ & $3.884 \,(0.487)$ & $1.537 \,(0.289)$  & $6.987\, (1.521)$    & $1.641 \,(0.126)$ & $0.615 \,(0.057)$ & $2.469 \,(0.325)$  \\ 
			& $200$ & $7.784 \,(0.479)$ & $2.158\, (0.303)$  & $9.544 \,(1.203)$    & $3.328 \,(0.146)$ & $0.878 \,(0.096)$ & $3.795 \,(0.541)$  \\ 
			\hline
			\multirow{2}{*}{YB} & $100$ & $4.122 \,(0.396)$ & $1.528 \,(0.107)$  & $7.570\, (0.812)$    & $2.402\, (0.235)$ & $0.926 \,(0.125)$  & $4.360 \,(0.573)$  \\ 
			& $200$ & $8.195 \,(0.365)$ & $1.912 \,(0.078)$  & $8.424 \,(0.433)$    & $4.919 \,(0.323)$ & $1.273 \,(0.101)$ & $5.559 \,(0.662)$  \\
			\hline
		\end{tabular}\label{table:est1}
	\end{table}

	\begin{table}[!tb]
		\centering\footnotesize
		\caption{
			The mean and standard deviation (in parentheses) of the Frobenius norm, spectral norm and matrix $\ell_\infty$-norm when $(A_{0,\min}, A_{0,\max}) = (0.4, 0.6)$ based on 10 simulated data sets.
		}\vspace{.15cm}
		\begin{tabular}{c c c c c c c c}
			\hline 
			\multicolumn{2}{c}{Model 1}  & \multicolumn{3}{c}{$n=100$} & \multicolumn{3}{c}{$n=300$} \\ 
			& $p$ & $\| \cdot \|_F$ & $\| \cdot \|$ & $\| \cdot \|_\infty$ & $\| \cdot \|_F$ & $\| \cdot \|$ & $\| \cdot \|_\infty$ \\ \hline
			\multirow{2}{*}{LL} & $100$ & $2.749 \,(0.401)$ & $1.180\, (0.390)$  & $4.362\, (3.019)$    & $0.975\,( 0.057)$ & $0.310 \,(0.085)$ & $0.546 \,(0.161)$  \\ 
			& $200$ & $5.008 \,(0.379)$ & $1.922 \,(0.309)$  & $9.599 \,(1.486)$    & $1.918 \,(0.101)$ & $0.550 \,(0.084)$ & $1.457 \,(0.503)$  \\ 
			\hline
			\multirow{2}{*}{YB} & $100$ & $2.890 \,(0.196)$ & $0.853 \,(0.111)$  & $1.688 \,(0.258)$    & $1.590\, (0.092)$ & $0.483 \,(0.051)$  & $1.003 \,(0.144)$  \\ 
			& $200$ & $4.024 \,(0.202)$ & $0.898 \,(0.110)$  & $1.772 \,(0.226)$    & $2.284 \,(0.106)$ & $0.527 \,(0.054)$ & $1.091 \,(0.080)$  \\
			\hline
			\hline 
			\multicolumn{2}{c}{Model 2}   & \multicolumn{3}{c}{$n=100$} & \multicolumn{3}{c}{$n=300$} \\ 
			& $p$ & $\| \cdot \|_F$ & $\| \cdot \|$ & $\| \cdot \|_\infty$ & $\| \cdot \|_F$ & $\| \cdot \|$ & $\| \cdot \|_\infty$ \\ \hline
			\multirow{2}{*}{LL} & $100$ & $2.641 \,(0.472)$ & $1.135 \,(0.372)$  & $4.469 \,(2.390)$    & $1.021\, (0.116)$ & $0.410 \,(0.071)$ & $1.149 \,(0.286)$  \\ 
			& $200$ & $6.784 \,(1.156)$ & $2.141 \,(0.380)$  & $9.701 \,(1.953)$    & $2.599 \,(0.163)$ & $0.756 \,(0.115)$ & $3.195 \,(0.656)$  \\ 
			\hline
			\multirow{2}{*}{YB} & $100$ & $3.304 \,(0.345)$ & $1.313\, (0.215)$  & $4.461 \,(0.836)$    & $1.961 \,(0.251)$ & $0.850 \,(0.183)$  & $2.692 \,(0.726)$  \\ 
			& $200$ & $7.794\, (0.538)$ & $2.469\, (0.204)$  & $11.695  \,(1.185)$    & $4.319 \,(0.354)$ & $1.504 \,(0.226)$ & $6.654  \,(1.294)$  \\
			\hline
			\hline
			\multicolumn{2}{c}{Model 3}   & \multicolumn{3}{c}{$n=100$} & \multicolumn{3}{c}{$n=300$} \\ 
			& $p$ & $\| \cdot \|_F$ & $\| \cdot \|$ & $\| \cdot \|_\infty$ & $\| \cdot \|_F$ & $\| \cdot \|$ & $\| \cdot \|_\infty$ \\ \hline
			\multirow{2}{*}{LL} & $100$ & $4.435 \,(0.520)$ & $1.979\,( 0.330)$  & $8.841 \,(1.648)$    & $1.797 \,(0.226)$ & $0.778 \,(0.119)$ & $3.245 \,(0.479)$  \\ 
			& $200$ & $9.800\, (0.706)$ & $3.048 \,(0.520)$  & $13.509 \, (2.585)$    & $4.000 \,(0.248)$ & $1.228 \,(0.224)$ & $5.645 \,(1.594)$  \\ 
			\hline
			\multirow{2}{*}{YB} & $100$ & $6.245\,( 0.377)$ & $2.572 \,(0.121)$  & $12.646 \, (1.246)$    & $3.674 \,(0.529)$ & $1.652 \,(0.242)$  & $7.547 \,(1.154)$  \\ 
			& $200$ & $12.266  \,(0.796)$ & $3.169\, (0.248)$  & $13.943  \,(1.262)$    & $8.232 \,(0.727)$ & $2.386\,( 0.288)$ & $10.248  \,(1.570)$  \\
			\hline
		\end{tabular}\label{table:est2}
	\end{table}

	\bibliographystyle{dcu}
	\bibliography{refs}
	
\end{document}